\documentclass[11pt]{article}
\usepackage{authblk}
\usepackage[toc,page]{appendix}
\usepackage[top=2cm, bottom=2cm, left=2cm, right=2cm]{geometry}
\usepackage{centernot}

\usepackage{color}
\usepackage{helvet}         
\usepackage{courier}        
\usepackage{type1cm}        

\usepackage{framed}
\usepackage{tikz}
\usepackage{makeidx}         
\usepackage{graphicx}        
\usepackage{multicol}        
\usepackage[bottom]{footmisc}

\usepackage{amsmath,mathrsfs,bbm}
\usepackage{amssymb}
\usepackage{bbold}
\usepackage{amsthm}
\usepackage{subcaption}
\usepackage{sidecap}
\usepackage{floatrow}
\usepackage{pdflscape}
\usepackage{pdflscape}
\usepackage{comment}
\usepackage[font=small]{caption}
\usepackage{enumitem}
\usepackage{esint}


\usepackage{scalerel}

\newtheorem{theorem}{Theorem}
\newtheorem{corollary}[theorem]{Corollary}
\newtheorem{lemma}[theorem]{Lemma}

\newtheorem{proposition}[theorem]{Proposition}

\theoremstyle{remark}

\newtheorem{remark}[theorem]{\bf Remark}

\numberwithin{theorem}{section}
\numberwithin{question}{section}
\numberwithin{figure}{section}
\numberwithin{equation}{section}






\begin{document}

\title{Power-law correction in the probability density function \\ of the critical Ising magnetization}
\bigskip{}
\author[1,2]{Federico Camia\footnote{Email: federico.camia@nyu.edu (corresponding author)}}
\author[3,4]{Omar El Dakkak\footnote{Email: omar.eldakkak@sorbonne.ae}}
\author[5]{Giovanni Peccati\footnote{Email: giovanni.peccati@uni.lu}}
\affil[1]{NYUAD, Abu Dhabi, UAE}
\affil[2]{Courant Institute of Mathematical Sciences, NYU, New York, USA}
\affil[3]{Sorbonne University Abu Dhabi, SAFIR, Abu Dhabi, UAE}
\affil[4]{MODAL'X, UMR CNRS 9023, Universit\'e Paris Nanterre, Paris, France}
\affil[5]{University of Luxembourg, Luxembourg}

\date{}

%

%

\maketitle

\begin{abstract}
    At the critical point, the probability density function of the Ising magnetization is believed to decay like $\exp{(-x^{\delta+1})}$, where $\delta$ is the Ising critical exponent that controls the decay to zero of the magnetization in a vanishing external field.
    In this paper, we discuss the presence of a power-law correction $x^{\frac{\delta-1}{2}}$, which has been debated in the physics literature.
    We argue that whether such a correction is present or not is related to the asymptotic behavior of a function that measures the extent to which the average magnetization of a finite system with an external field is influenced by the boundary conditions.
    Our discussion is informed by a mixture of heuristic calculations and rigorous results.
    Along the way, we review some recent results on the critical Ising model and prove properties of the average magnetization in two dimensions which are of independent interest.
\end{abstract}

\noindent
{\bf Keywords:} Critical Ising model, magnetization, power-law correction, boundary conditions. \\
{\bf AMS subject classification:} 82B20, 82B27.

\section{Background and motivation}

For decades, the Ising model has been one of the most studied and influential models of statistical mechanics.
It was introduced by Lenz~\cite{Lenz1920} in 1920 and its one-dimensional version was studied by Ising in his Ph.D. thesis \cite{Ising1924} and subsequent paper \cite{Ising1925}, but it was not until Peierls' and Onsager's famous investigations of the two-dimensional version that the model gained popularity.
In 1936 Peierls \cite{Peierls1936} proved that the two-dimensional version undergoes a phase transition, then in 1941 Kramers and Wannier~\cite{KW1941} located the critical temperature of the model defined on the square lattice and in 1944 Onsager~\cite{Onsager1944} derived its free energy. Since then, the two-dimensional Ising model has played a special role in the theory of critical phenomena: its phase transition has been extensively studied by both physicists and mathematicians and has become a prototypical example and a test case for developing ideas and techniques and for checking general hypotheses.

The Ising model attempts to capture ferromagnetism, one of the most interesting phenomena in solid state physics.
Ferromagnetism refers to the tendency of atomic spins to become spontaneously polarized in the same direction, generating a macroscopic magnetic field, as observed in some metals such as iron and nickel. However, this tendency is only present when the temperature is lower than a metal-dependent characteristic temperature, called \emph{Curie temperature} or \emph{critical temperature}. Above the Curie temperature the spins are oriented at random, producing no net magnetic field. Moreover, as the Curie temperature is approached from either side, the specific heat of the metal appears to diverge.

The Ising model is a crude but insightful attempt to reproduce the behavior described above. Its one-dimensional version fails to do so, as already realized by Ising \cite{Ising1924,Ising1925}, but the two-dimensional version does exhibit a phase transition, as shown by Peierls \cite{Peierls1936} and subsequently investigated by Kramers and Wannier \cite{KW1941}, Onsager \cite{Onsager1944} and many others.
In the most common version of the two-dimensional Ising model one associates a $\pm 1$ \emph{spin} variable to each vertex of a square grid and then assigns to each spin configuration a probability derived from a Gibbs distribution that favors the alignment of neighboring spins.
The appeal of the two-dimensional version of the model stems from its simplicity and the fact that it yields to an exact treatment, which has revealed a rich mathematical structure and has provided deep insights into the general theory of phase transitions.
In particular, various fundamental aspects of the theory of continuous phase transitions, or \emph{critical phenomena}, such as the \emph{scaling hypothesis} and the emergence of \emph{conformal invariance}, can be tested using the two-dimensional Ising model.

These aspects of the theory of critical phenomena can be heuristically understood in terms of the concept of \emph{correlation length}, roughly speaking the distance beyond which different parts of a system become uncorrelated.
More precisely, a physical system that is not close to a phase transition point will exhibit exponential decay of correlations, with a decay rate given by the inverse of the correlation length.
It can be observed experimentally that, when a physical system undergoes a continuous phase transition, the correlation length reaches the linear dimension of the experimental apparatus and fluctuations extend to all length scales (as in the fascinating phenomenon of critical opalescence \cite{Smo08,Ein10}).
Mathematically, this corresponds to a diverging correlation length, as one can check in the two-dimensional Ising model, and raises the following question: What happens to the correlations between, say, far-away spin variables in the Ising model?

The answer is that exponential decay is replaced by power-law decay, in a mathematical manifestation of the fact that the system no longer has a finite correlation length and that fluctuations extend to all length scales.
More generally, the lack of a finite correlation length implies that correlations and other functions of the system become homogeneous and exhibit a characteristic power-law behavior.
Interestingly, the exponents in these power laws appear to be universal across many different systems and models, and are largely independent of any microscopic details, depending only on global features such as symmetry and the number of spatial dimensions.

From a physics perspective, a natural and powerful way to study critical phenomena is to move from a discrete perspective to a continuum one and use a field-theoretic approach, focusing on large-scale properties and ignoring all microscopic details.
For the $d$-dimensional Ising model, this implies moving from a collection of random variables (the spin variables) associated to the vertices of a regular lattice (a lattice ``field'') to a ``continuum field'' defined on $\mathbb{R}^d$ and describing the fluctuations at scales much larger than that of the lattice spacing.

Mathematically, this idea can be implemented by taking a (\emph{continuum}) \emph{scaling limit}, which corresponds to sending the lattice spacing to zero while focusing on certain functions of the spin variables or on certain geometric structures. For example, focusing on interfaces leads to Schramm-Loewner evolution (SLE) curves (see~\cite{unpotutti14}). In this paper, we will follow the approach developed in~\cite{cn2009,cgn2015,cgn2016,cjn2020} and study the \emph{magnetization}, which can be expressed as a sum of spin variables over a given region of space.
The magnetization is of great interest to physicists because it has a clear physical meaning, representing the macroscopic magnetic field generated by a ferromagnetic material, and because it is the \emph{order parameter} of the paramagnetic-ferromagnetic phase transition. Moreover, since the spin variables are highly correlated at the Curie (critical) temperature, the magnetization is interesting from a probabilistic perspective because it provides an important example of a sum of strongly correlated random variables, leading to interesting non-central limit theorems (see~\cite{cn2009,Cam2012,cgn2015,cgn2016}).

The authors of~\cite{cgn2015} considered the critical Ising model on the square lattice with lattice spacing $a$, eventually sending $a$ to zero. Among other things, they showed that the magnetization in a unit square, rescaled by an appropriate power of $a$, has a nontrivial (random) limit in distribution, say $m$, as $a \to 0$.
In~\cite{cgn2016}, the same authors showed that the distribution of $m$ has an analytic density function and a tail
\begin{equation} \label{eq:leading_tail}
e^{-c x^{16}}    
\end{equation}
for some constant $c>0$ (see equation~\eqref{eq:expo-tail} below).

It is natural to ask if one can go beyond the tail behavior and, in particular, establish the asymptotic behavior of the density function.
The questions of identifying the asymptotic behavior for large $x$, its dependence on the size of the system and its role in the study of large deviations have a long history in physics and are still of current interest (see~\cite{tsypin94a, tsypin94b, Bru1995, hilwild95, porastella23, bdr2024, TS25}).
The problem of dependence on the size of the system pertains to the theory of finite-size scaling, which is an established and extensively studied branch of the theory of phase transitions, interesting in its own right, and essential to the understanding and effective use of Monte Carlo methods.
A full understanding of the asymptotic behavior is a natural endeavor from a mathematics perspective and is important to characterize the statistics of rare events, a problem that has been much studied in the last decades (see~\cite{bdr2024} and references therein).

For the Ising model in any dimension and for models in the Ising universality class, it has been argued~\cite{tsypin94a, tsypin94b, Bru1995, porastella23} that the density function of the distribution of the magnetization $m_L$ in a square of side length $L$ behaves like
\begin{align} \label{eq:prediction}
x^{\frac{\delta-1}{2}} e^{-C_Lx^{\delta+1}}
\end{align}
as $x \to \infty$, for some positive constant $C_L$, which depends on the size of the system.
However, the status of the leading order and that of the power-law correction in~\eqref{eq:prediction} are not the same.
The leading exponential part of~\eqref{eq:prediction}, which is consistent with~\eqref{eq:leading_tail} (and with~\eqref{eq:lim-gen-funct} below), seems to be unanimously accepted in the physics literature (see, e.g., \cite{bdr2024,TS25}). A derivation of this leading behavior is given in~\cite{MCWbook72} based on plausible assumptions. Some of those assumptions can probably be verified using the results and methods of~\cite{cgn2014} and~\cite{cgn2016}, but checking the validity of the arguments in~\cite{MCWbook72} is beyond the scope of this article, so here we will \emph{assume} the result, as explained in the next section.
On the contrary, doubts have been raised concerning the universality of the power-law prefactor. In particular, the authors of~\cite{bdr2024} argue that the prefactor is not always present because it can be hidden by an additional subleading term.
Here, we corroborate this conclusion and identify and study a function that could be responsible for hiding the power-law prefactor that appears in~\eqref{eq:prediction}.

For the two-dimensional Ising model, as already mentioned, the tail of the distribution of the critical magnetization was established in~\cite{cgn2016} and is given by \eqref{eq:leading_tail} (see also \cite{cgn2014} for a rigorous determination of the magnetization exponent $\delta=15$).
In this paper, under the assumption that the logarithm of the density function has a leading behavior analogous to \eqref{eq:leading_tail} at infinity (see \eqref{eq:density-decay}), we argue that the subleading behavior can be characterized in terms of the asymptotic behavior of an explicit function $\Delta_{L,\xi}(h)$ (see equation~\eqref{def:Delta} below), which depends on the boundary conditions $\xi$.
Roughly speaking, this function calculates the difference in average magnetization between a finite system with boundary condition $\xi$ and a portion of the infinite-volume system, when an external magnetic field is applied.
We propose the following criterion: if $\Delta_{L,\xi}(h)$ is integrable as $h\to\infty$, then the asymptotic behavior is given by~\eqref{eq:prediction} with $\delta=15$.

As an example, we discuss the case of periodic boundary conditions $\xi=p$, where we can exploit translation invariance. Assuming the existence of a continuum scaling limit with the same properties proved for other types of boundary conditions (see~\cite{cgn2015,cgn2016,cjn2020bis}), we prove that $\Delta_{L,p}(h)$ is integrable as $h\to\infty$.
A key step in the proof is Lemma~\ref{lemma-bounds}, stated and proved in the last section, which shows that, in the near-critical scaling limit with an external field (see Section~\ref{sec:defres}), away from the boundary, the influence of the boundary condition $\xi$ is small and decreases rapidly as the external field grows.

We carry out our analysis in two dimensions because of recent rigorous results for the critical Ising model on the square lattice. However, much of the analysis is independent of the dimension and the same function $\Delta_{L,\xi}$ should also determine the subleading behavior in higher dimensions, as we explain in Section~\ref{sec:discussion}.

In the next section, we introduce the Ising model on the square lattice and review some definitions and recent results.
In Section~\ref{sec:main-results}, we present some rigorous results about the average magnetization in finite domains, discuss the link between the mean of the magnetization and its moment-generating function, and introduce the function $\Delta_{L,\xi}$.
Section~\ref{sec:proof-main-statement} is concerned with the density of the magnetization and explains the role of the function $\Delta_{L,\xi}$.
In Section~\ref{sec:periodic}, we discuss the case of periodic boundary conditions.
In Section~\ref{sec:discussion}, we discuss the possibility of extending our arguments to higher dimensions.
The paper ends with a section containing the proofs of two key lemmas and a corollary, which are of interest in their own right.

\section{Definitions and some recent results} \label{sec:defres}

Take $a>0$, let $\Lambda_L := [-L/2,L/2]^d$ and let $\Lambda^a_L := a\mathbb{Z}^d \cap \Lambda_L$, interpreted as a graph whose vertices are the elements of $a\mathbb{Z}^d \cap [-L/2,L/2]^d$ and where two vertices are \emph{adjacent} if they are nearest neighbors.
The Ising model at inverse temperature $\beta>0$ with external field $H$ on $\Lambda^a_L$ with boundary condition $\xi$ is a probability measure $\mathbb{Q}^{\beta,H,a}_{L,\xi}$ on $\{-1,1\}^{\Lambda^a_L}$ defined as follows.
For any \emph{spin configuration} $\sigma = \{\sigma_x\}_{x\in\Lambda^a_L} \in \{-1,1\}^{\Lambda^a_L}$ and any $s=\{s_y\}_{y\in\partial\Lambda^a_L} \in \{-1,0,1\}^{\partial\Lambda^a_L}$, let 
\begin{equation}\label{def:energy}
E^{H,a}_{L,s}(\sigma):= -\sum_{\stackrel{x\sim y}{x,y \in \Lambda^a_L}} \sigma_x \sigma_y - H \, \sum_{x \in \Lambda^a_L} \sigma_x - \sum_{\stackrel{x\sim y}{x \in \Lambda^a_L, y \in \partial \Lambda^a_L}} \sigma_x s_y
\end{equation}
be the \emph{energy} of $\sigma$, where the first sum is over nearest neighbor pairs in $\Lambda^a_L$ and the last is over vertices in the \emph{outer boundary} $\partial \Lambda^a_L$ of $\Lambda^a_L$ (i.e., the set of vertices of $a\mathbb{Z}^d$ not in $\Lambda^a_L$ but adjacent to a vertex in $\Lambda^a_L$).

The probability measure $\mathbb{Q}^{\beta,H,a}_{L,\xi}$ is defined by
\begin{equation}\label{ising-dist-h}
\mathbb{Q}^{\beta,H,a}_{L,\xi}(\sigma) := \sum_{s \in \{-1,0,1\}^{\partial\Lambda^a_L}} P_{\xi}(s) \, \frac {e^{ -\beta\, E^{H,a}_{L,s}(\sigma)}}{Z^{\beta,H,a}_{L,s}} \, ,
\end{equation}
where $P_{\xi}$ is some prescribed distribution encoding the boundary condition $\xi$ and the \emph{partition function} $Z^{\beta,H,a}_{L,s}$ is defined as
\begin{align} \label{def:partition_function}
    Z^{\beta,H,a}_{L,s} := \sum_{\sigma \in \{-1,1\}^{\Lambda^a_L}} e^{ -\beta\, E^{H,a}_{L,s}(\sigma)} \, .
\end{align}
For example, for plus boundary conditions, $\xi=+$, we choose $P_+$ such that $P_+(s_y=1)=1$ for every $y \in \partial\Lambda^a_L$ and get
\begin{equation}\label{def:energy_+bc}
\mathbb{Q}^{\beta,H,a}_{L,+}(\sigma):=\frac{1}{Z^{\beta,H,a}_{L,+}} \, \exp{\left(\beta\sum_{\stackrel{x\sim y}{x,y \in \Lambda^a_L}} \sigma_x \sigma_y + \beta H \, \sum_{x \in \Lambda^a_L} \sigma_x + \beta\sum_{\stackrel{x\sim y}{x \in \Lambda^a_L, y \in \partial \Lambda^a_L}} \sigma_x \right)} \, ,
\end{equation}
where $Z^{\beta,H,a}_{L,+} \equiv Z^{\beta,H,a}_{L,\{ 1 \}^{\partial\Lambda^a_L}}$.
The case of minus boundary conditions, $\xi=-$, is analogous, while the choice $P_0(s_y=0)=1$ for every $y \in \partial\Lambda^a_L$ corresponds to \emph{free} boundary conditions.
We will also consider, in Section~\ref{sec:periodic}, the case of \emph{periodic} boundary conditions, denoted by $\xi=p$, in which corresponding vertices on opposite sides of $\Lambda^a_L$ are considered nearest neighbors. In this case there is no boundary and the last sum in the energy \eqref{def:energy} is not present. In two dimensions, this corresponds to placing the Ising model on a torus.

It is useful to introduce a partial order on spin configurations such that $\sigma \geq \sigma'$ if and only if $\sigma_x \geq \sigma'_x$ for each $x \in a\mathbb{Z}^2$ where $\sigma_x$ is defined.
Given two boundary conditions $\xi$ and $\xi'$, we say that $\xi$ \emph{dominates} $\xi'$ if $\mathbb{Q}^{H,a}_{L,\xi}$ stochastically dominates $\mathbb{Q}^{H,a}_{L,\xi'}$.
It follows from the FKG inequality that $\xi=+$ dominates all other boundary conditions, while $\xi=-$ is dominated by all other boundary conditions (see e.g. \cite{FVbook}).
It is also clear that $\mathbb{Q}^{H,a}_{L,\xi}$ stochastically dominates $\mathbb{Q}^{H',a}_{L,\xi}$ if $H>H'$.

It is a simple but important observation that the energy can be written as
\begin{equation}\label{eq:energy}
E^{H,a}_{L,s}(\sigma) = E^{0,a}_{L,s}(\sigma) - H \, M^a_L(\sigma) \, ,
\end{equation}
where
\begin{equation}\label{def:magnetization}
M^a_L(\sigma) := \sum_{x\in \Lambda^a_L} \sigma_x
\end{equation}
is the \emph{total magnetization} in $\Lambda^a_L$, which implies that
\begin{align} \label{eq:r-d-dwrivative}
    \mathbb{E}^{\beta,H,a}_{L,s}(\cdot)=\frac{\mathbb{E}^{\beta,0,a}_{L,s}\left(\cdot \, \mathrm{e}^{HM^a_L}\right)}{\mathbb{E}^{\beta,0,a}_{L,s}\left(\mathrm{e}^{HM^a_L}\right)} \, ,
\end{align}
where $\mathbb{E}^{\beta,H,a}_{L,s}$ denotes expectation with respect to $\mathbb{Q}^{\beta,H,a}_{L,s}$.

Except when otherwise stated, as in some parts of Sections \ref{sec:periodic} and \ref{sec:discussion}, we will consider the two-dimensional model. In particular, all rigorous results are stated for $d=2$.
Since the work of Kramers and Wannier~\cite{KW1941} and later Onsager~\cite{Onsager1944}, it is known that the Ising critical inverse temperature on the square lattice is
\begin{equation}\label{}
\beta_c = \frac 1 2 \ln(1+\sqrt{2})\,.
\end{equation}
In the rest of the paper, we fix $\beta=\beta_c$ and drop it from the notation, writing $\mathbb{Q}^{H,a}_{L,\xi}$ instead of $\mathbb{Q}^{\beta_c,H,a}_{L,\xi}$. We also restrict our attention to positive external fields: $H \geq 0$.
It is then well-known (see e.g. \cite{FVbook}) that the distribution $\mathbb{Q}^{H,a}_{L,\xi}$ has a unique infinite-volume (\emph{thermodynamic}) limit, as $L \to \infty$, independent of the choice of boundary conditions $\xi$, which we denote by $\mathbb{Q}^{H,a}$ (we observe that, trivially, each measure $\mathbb{Q}^{\beta,H,a}_{L,\xi}$ can be identified with a measure on the whole lattice $a\mathbb{Z}^2$).
We consider the random variable
\begin{equation}\label{def:rescaled_magnetization}
m^a_L := \sum_{x\in \Lambda^a_L} S_x = a^{15/8} \sum_{x\in \Lambda^a_L} S_x \, ,
\end{equation}
where $\{ S_x \}_{x \in \Lambda^a_L}$ is a collection of $\pm 1$ random variables with joint distribution $\mathbb{Q}^{H,a}_{L,\xi}$ or $\mathbb{Q}^{H,a}$, with expectations $\mathbb{E}^{H,a}_{L,\xi}$ and $\mathbb{E}^{H,a}$, respectively, and denote the distribution of $m^a_L$ by $\mathbb{P}^{H,a}_{L,\xi}$ or $\mathbb{P}^{H,a}$, respectively.

We will be interested in the (\emph{continuum}) \emph{scaling limit} $m_L$ of the rescaled magnetization $m^a_L$, as $a \to 0$.
It is shown in~\cite{cgn2015,cgn2016} that in two dimensions, at least for some natural choices of boundary conditions, including plus, minus and free, if the external field is chosen to be $H=ha^{15/8}$ with $h \geq 0$ (called the \emph{near-critical} or \emph{off-critical regime}), the distributions $\mathbb{P}^{H,a}_{L,\xi}$ and $\mathbb{P}^{H,a}$ have unique limits, which we denote by $\mathbb{P}^{h}_{L,\xi}$ and $\mathbb{P}^{h}$, respectively.
We will further denote by $\mathbb{E}^{h}_{L,\xi}$ and $\mathbb{E}^{h}$ the expectations with respect to $\mathbb{P}^{h}_{L,\xi}$ and $\mathbb{P}^{h}$, and will drop the superscript $h$ when $h=0$.
To summarize, we will use the following notation.
\begin{itemize}
    \item Infinite volume system with $h>0$: $m_L$ has distribution $\mathbb{P}^h$, with expectation $\mathbb{E}^h$.
    \item Infinite volume system with $h=0$: $m_L$ has distribution $\mathbb{P}$, with expectation $\mathbb{E}$.
    \item Finite system in $[-L/2,L/2]^2$ with boundary condition $\xi$ and $h>0$: $m_L$ has distribution $\mathbb{P}^h_{L,\xi}$, with expectation $\mathbb{E}^h_{L,\xi}$.
    \item Finite system in $[-L/2,L/2]^2$ with boundary condition $\xi$ and $h=0$: $m_L$ has distribution $\mathbb{P}_{L,\xi}$, with expectation $\mathbb{E}_{L,\xi}$.
\end{itemize}

It is believed and generally assumed (see, e.g., \cite{bdr2024} and references therein and~\cite{MCWbook72}) that $\mathbb{P}_{L,\xi}$ admits a density function $f_{L,\xi}$ such that
\begin{align} \label{eq:density-decay}
    \log f_{L,\xi}(x) \sim -C_L x^{16} \qquad \text{ as } x \to \infty \, ,
\end{align}
where $f(x) \sim g(x)$ as $x\to\infty$ means that $\lim_{x\to\infty}f(x)/g(x)=1$.
This is partially proved in~\cite{cgn2016}, where it is shown that, for various boundary conditions $\xi$, including plus, minus and free, $\mathbb{P}_{L,\xi}$ has an analytic density function and that
\begin{align} \label{eq:expo-tail}
    \log\mathbb{P}_{L,\xi}(m_L>x) \sim -K_L x^{16} \qquad \text{ as } x \to \infty \, ,
\end{align}
for some constant $K_L>0$, which is independent of the boundary condition $\xi$.
This can be written as 
\begin{align}
    \label{eq:theta1}
    \mathbb{P}_{L,\xi}(m_L>x)=e^{-K_Lx^{16}+\theta(x)},
\end{align}
where $\theta(x)=o(x^{16})$ as $x\to\infty$.
Showing that $\theta(x)$ is monotone for $x$ sufficiently large would suffice to prove \eqref{eq:density-decay}, but since we have no information on $\theta$, in this paper we \emph{assume} \eqref{eq:density-decay} and write
\begin{align} \label{eq:density}
    f_{L,\xi}(x)=\varphi_{L,\xi}(x) \, e^{-C_L x^{16}},
\end{align}
which implicitly defines the function
\begin{align} \label{def:density}
    \varphi_{L,\xi}(x):=e^{C_L x^{16}} \, f_{L,\xi}(x) \, .
\end{align}

\begin{remark} Inferring the asymptotic behavior of densities from tail estimates (or Laplace transform estimates) is a notoriously challenging task, often requiring regularity assumptions on the density, such as monotonicity. For an illustration of this fact, see e.g. \cite[p. 446]{Fellerbook} or \cite{seneta}.
\end{remark}

\medskip

We end this section with a brief discussion of the ansatz~\eqref{eq:density-decay}. To begin with, we note that the asymptotic behavior of the moment-generating function of $m_1$ is discussed in Proposition~2.2 of~\cite{cgn2016}, which shows that
\begin{align} \label{eq:lim-gen-funct}
    \lim_{h \to \infty} \frac{1}{h^{16/15}} \log\mathbb{E}_{1,\xi}\left(e^{h \, m_1}\right) = b
\end{align}
for some universal constant $b>0$, independent of the boundary conditions.

Using \eqref{eq:density}, the moment-generating function of $m_1$ can be written as
\begin{align}
    \mathbb{E}_{1,\xi}\left(e^{hm_1}\right) = \int_{-\infty}^{\infty} \varphi_{1,\xi}(x) e^{hx -C_1 x^{16}} \mathrm{d}x \, .
\end{align}
The function $g(x)=hx-C_1x^{16}$ has a unique maximum at $x_0=\left(\frac{h}{16 C_1}\right)^{1/{15}}$.
Applying Laplace's method to approximate the integral above leads to
\begin{align} \label{eq:Laplace-approx}
\begin{split}
    & \mathbb{E}_{1,\xi}\left(e^{hm_1}\right) \approx \int_{-\infty}^{\infty} \varphi_{1,\xi}(x) e^{g(x_0)+\frac{1}{2}g''(x_0)(x-x_0)^2} \mathrm{d}x \approx \varphi_{1,\xi}(x_0)e^{g(x_0)} \int_{-\infty}^{\infty} e^{-\frac{1}{2} \vert g''(x_0) \vert (x-x_0)^2} \mathrm{d}x \\
    & \qquad \qquad \quad = \sqrt{\frac{2\pi}{\vert g''(x_0) \vert}} \varphi_{1,\xi}(x_0) e^{g(x_0)} \propto \frac{\varphi_{1,\xi}\left(\left(\frac{h}{16 C_1}\right)^{1/{15}}\right)}{h^{7/{15}}} \exp{\left(\frac{15/16}{(16 C_1)^{1/{15}}}h^{16/{15}}\right)} \, ,
\end{split}
\end{align}
which is consistent with~\eqref{eq:lim-gen-funct}, suggesting moreover a link between $C_1$ and $b$.
In addition, we note the following intriguing fact. If $\varphi_{1,\xi}(x)$ grows like $x^7$, as predicted by \eqref{eq:prediction}, the powers of $h$ in front of the exponential in the right hand side of~\eqref{eq:Laplace-approx} cancel exactly.
In Section~\ref{sec:proof-main-statement}, we will use a similar approximation to extract more precise information about the function $\varphi_{L,\xi}$.


\section{Average magnetization and the moment-generating function} \label{sec:main-results}
Equations~\eqref{eq:expo-tail} and \eqref{eq:lim-gen-funct} from the previous section also hold for the infinite-volume measure, that is, with $\mathbb{P}_{L,\xi}$ and $\mathbb{E}_{L,\xi}$ replaced by $\mathbb{P}$ and $\mathbb{E}$, respectively.
In addition, in the infinite-volume case, it is easy to see that the following holds.
\begin{proposition} \label{prop:magnetization}
For all $L>0$ and $h\geq0$, we have that
\begin{align}
    \mathbb{E}^h(m_L) = B_L \, h^{1/15} \, ,
\end{align}
where $B_L = L^2 \, B_1$ and $B_1\in(0,\infty)$ is a universal constant.
\end{proposition}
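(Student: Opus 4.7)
The plan is to combine translation invariance of the discrete infinite-volume measure $\mathbb{Q}^{H,a}$ with the scale invariance built into the near-critical regime $H=ha^{15/8}$, and then pass to the scaling limit $a\to 0$.

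First I would use translation invariance of $\mathbb{Q}^{H,a}$, together with the definition \eqref{def:rescaled_magnetization}, to write
\begin{align*}
\mathbb{E}^{H,a}(m_L^a) \;=\; a^{15/8}\sum_{x\in\Lambda_L^a}\mathbb{E}^{H,a}(S_x) \;=\; a^{15/8}\,|\Lambda_L^a|\,\mathcal{M}(H),
\end{align*}
where $\mathcal{M}(H):=\mathbb{E}^{H,a}(S_0)$. The function $\mathcal{M}$ does not depend on $a$, because the critical Ising model on $a\mathbb{Z}^2$ with field $H$ is isomorphic (via the vertex relabeling $x\mapsto x/a$) to the same model on $\mathbb{Z}^2$. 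Since $|\Lambda_L^a|=(L/a)^2(1+o(1))$ as $a\to 0$, the near-critical substitution $H=ha^{15/8}$ yields
\begin{align*}
\mathbb{E}^{ha^{15/8},a}(m_L^a) \;=\; L^2\,a^{-1/8}\,\mathcal{M}(ha^{15/8})\,(1+o(1)).
\end{align*}

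Next I would take $a\to 0$. Convergence in distribution $m_L^a\Rightarrow m_L$ is provided by \cite{cgn2015,cgn2016}, so to upgrade to convergence of means it suffices to verify uniform integrability of $\{m_L^a\}_{a>0}$. This should follow from uniform-in-$a$ exponential tail bounds on $m_L^a$ of the same flavor as \eqref{eq:expo-tail}, which can be extracted from the exponential-moment estimates underlying Proposition~2.2 of \cite{cgn2016} (proved there in finite volume, but adaptable to the infinite-volume case via a further limit or a direct argument). Granting this,
\begin{align*}
C(h) \;:=\; \lim_{a\to 0^+} a^{-1/8}\,\mathcal{M}(ha^{15/8})
\end{align*}
exists and equals $\mathbb{E}^h(m_L)/L^2$; in particular, the limit is independent of $L$, which already gives the factorization $\mathbb{E}^h(m_L)=L^2\,C(h)$.

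Finally, I would extract the $h$-dependence via the change of variables $b=ah^{8/15}$. A short computation gives $a^{-1/8}=h^{1/15}b^{-1/8}$ and $ha^{15/8}=b^{15/8}$, so
\begin{align*}
C(h) \;=\; h^{1/15}\lim_{b\to 0^+} b^{-1/8}\mathcal{M}(b^{15/8}) \;=\; h^{1/15}\,C(1).
\end{align*}
Setting $B_1:=C(1)$ yields the claimed identity $\mathbb{E}^h(m_L)=L^2 B_1 h^{1/15}$; finiteness of $B_1$ follows from the integrability of $m_1$ under $\mathbb{P}^1$, and positivity from FKG (which forces $\mathcal{M}(H)>0$ for $H>0$). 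The only non-routine step is the uniform-integrability input; the remainder is translation invariance and elementary scaling. Notably, the proof also proves the sharp form of the magnetization exponent $\mathcal{M}(H)\sim B_1 H^{1/15}$ as a byproduct, without needing to invoke it from \cite{cgn2014}.
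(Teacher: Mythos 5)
Your overall strategy coincides with the paper's: translation invariance reduces $\mathbb{E}^{H,a}(m_L^a)$ to the one-point function, the factor $L^2$ comes out of counting sites, and convergence in distribution plus exponential-moment (uniform integrability) bounds from \cite{cgn2015,cgn2016} upgrade this to convergence of means. The one genuinely different step is how you extract the $h^{1/15}$ law: the paper invokes Theorem~4 of \cite{cjn2020bis}, which asserts directly that $\lim_{H\to 0}H^{-1/15}\,\mathbb{E}^{H,1}(\sigma_0)=B_1\in(0,\infty)$, whereas you obtain the \emph{existence} of this limit as a consequence of the existence of the near-critical scaling limit at $h=1$, via the substitution $b=ah^{8/15}$. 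The substitution is correct ($a^{-1/8}=h^{1/15}b^{-1/8}$ and $ha^{15/8}=b^{15/8}$), and it is a nice observation that scaling-limit existence plus uniform integrability already force $C(1)=\lim_{b\to 0}b^{-1/8}\mathcal{M}(b^{15/8})$ to exist and equal $\mathbb{E}^1(m_1)$; you also get the $L^2$ factorization more directly than the paper, which obtains it from the continuum scaling relation \eqref{eq:B_L-scaling}.

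There is, however, a genuine gap at the last step: positivity of $B_1$. FKG gives $\mathcal{M}(H)>0$ for each fixed $H>0$, but this says nothing about the normalized limit $\lim_{H\to 0}H^{-1/15}\mathcal{M}(H)$, which could perfectly well be $0$ as far as your argument shows. Equivalently, stochastic domination only yields $\mathbb{E}^1(m_1)\geq \mathbb{E}(m_1)=0$, not strict inequality, and the Radon--Nikodym identity \eqref{eq:r-d-dwrivative} is a finite-volume statement that does not pass directly to the infinite-volume limit to rescue you. Ruling out $B_1=0$ requires a matching lower bound $\mathcal{M}(H)\geq c\,H^{1/15}$, which is precisely the content of the magnetization-exponent result of \cite{cgn2014} (or of Theorem~4 of \cite{cjn2020bis}, which the paper cites and which packages existence, finiteness and positivity together). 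So your closing claim that the argument recovers the sharp asymptotics without invoking that literature is overstated: you recover existence and finiteness of the limit, but positivity still needs the external input.
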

\begin{proof}
If we denote by $\mathcal{N}(a)$ the number of vertices in $\Lambda^a_1$ and by $\sigma_0$ the spin at the origin, translation invariance implies that
\begin{align}
    \frac{\mathbb{E}^{H,a}(m_1^a)}{h^{1/{15}}} = \frac{a^{15/8} \mathcal{N}(a) \mathbb{E}^{H,a}(\sigma_0)}{h^{1/{15}}} = \frac{a^{2} \mathcal{N}(a) \mathbb{E}^{H,a}(\sigma_0)}{a^{1/8}h^{1/{15}}} = a^{2} \mathcal{N}(a) \frac{\mathbb{E}^{H,a}(\sigma_0)}{(a^{15/8}h)^{1/{15}}} = a^{2} \mathcal{N}(a) \frac{\mathbb{E}^{H,1}(\sigma_0)}{H^{1/{15}}} \, .
\end{align}
Since $\lim_{a \to 0}a^{2} \mathcal{N}(a)=1$, Theorem~4 of~\cite{cjn2020bis} implies that
\begin{align}
    \lim_{a \to 0}\frac{\mathbb{E}^{H,a}(m_1^a)}{h^{1/{15}}} = \lim_{H \to 0} \frac{\mathbb{E}^{H,1}(\sigma_0)}{H^{1/{15}}} = B_1 \, ,
\end{align}
where $B_1$ is the constant $B\in(0,\infty)$ in Theorem~4 of~\cite{cjn2020bis}.
The convergence in distribution of $m^a_1$ (see Theorem~1.4 of~\cite{cgn2016}) and moment-generating function bounds (see Proposition~3.5 of~\cite{cgn2015}) imply the convergence of $\mathbb{E}^{H,a}(m^a_1)$ to $\mathbb{E}^{h}(m_1)$, so we can conclude that $\mathbb{E}^{h}(m_1)=B_1h^{1/{15}}$.
Furthermore, the scaling properties of the magnetization (see Section~4.2 of~\cite{cjn2020}) imply that
\begin{align} \label{eq:B_L-scaling}
    \mathbb{E}^h(m_L) = L^{15/8} \, \mathbb{E}^{L^{15/8}h}(m_1) = L^{15/8} \, B_1 (L^{15/8} h)^{1/15} = B_1 \, L^2 \, h^{1/15} \;\;\; \forall h\geq0 \, ,
\end{align}
so that $\mathbb{E}^h(m_L) = B_L \, h^{1/15}$ with $B_L = L^2 \, B_1$.
\end{proof}

When $\mathbb{E}^h$ is replaced by $\mathbb{E}^h_{L,\xi}$, the simple proof above does not work because of the lack of translation invariance.
However, with the right ingredients, it is not difficult to obtain the following related results, whose proofs, except for that of Lemma~\ref{lemma_h-lim}, are postponed to Section~\ref{sec:proofs}.
\begin{lemma} \label{lemma_limit}
    For any boundary conditions $\xi$ 
    such that $\mathbb{P}^{ha^{15/8},a}_{L,\xi}$ has a (subsequential) weak limit $\mathbb{P}^{h}_{L,\xi}$, 
    we have that
\begin{align} \label{eq:limE(m_L)}
    \lim_{L\to\infty}\frac{1}{L^2}\mathbb{E}^h_{L,\xi}(m_{L}) = \frac{16}{15} \, b \, h^{1/15} \;\;\; \forall h\geq0 \, ,
\end{align}
where $b$ is the constant appearing in~\eqref{eq:lim-gen-funct}.
\end{lemma}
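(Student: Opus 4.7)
My plan is to reduce the two-parameter statement to a one-variable asymptotic in the external field and then to differentiate the moment generating function asymptotic \eqref{eq:lim-gen-funct} via a convex-analysis argument. The spatial scaling of the near-critical magnetization proved in Section~4.2 of~\cite{cjn2020}, the same ingredient used in~\eqref{eq:B_L-scaling}, combined with the dilation invariance of the boundary conditions considered, gives
$$\mathbb{E}^h_{L,\xi}(m_L)=L^{15/8}\,\mathbb{E}^{L^{15/8}h}_{1,\xi}(m_1).$$
Setting $\tilde h := L^{15/8}h$, the prefactor $L^{15/8}/L^2=L^{-1/8}=(h/\tilde h)^{1/15}$ shows that for any fixed $h>0$ the claim \eqref{eq:limE(m_L)} is equivalent to the one-variable statement
$$\lim_{\tilde h\to\infty}\frac{\mathbb{E}^{\tilde h}_{1,\xi}(m_1)}{\tilde h^{1/15}}=\frac{16}{15}\,b.$$
(The case $h=0$ is handled separately, e.g.\ by FKG-type bounds, since the right-hand side of~\eqref{eq:limE(m_L)} vanishes.)

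Next, I express $\mathbb{E}^{\tilde h}_{1,\xi}(m_1)$ as a derivative. Passing the tilting identity~\eqref{eq:r-d-dwrivative} to the scaling limit, which is permitted by the weak convergence of~\cite{cgn2016} together with the moment generating function bounds from Proposition~3.5 of~\cite{cgn2015} (providing the uniform integrability required for the exponential tilts), I obtain
$$\mathbb{E}^{\tilde h}_{1,\xi}(m_1)=F'(\tilde h),\qquad F(\tilde h):=\log\mathbb{E}_{1,\xi}\bigl(e^{\tilde h\,m_1}\bigr).$$
Because of the super-exponential tail~\eqref{eq:expo-tail}, the function $F$ is finite on all of $\mathbb{R}$, analytic, and convex.

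To extract an asymptotic for $F'$ from the asymptotic $F(\tilde h)\sim b\,\tilde h^{16/15}$ given by~\eqref{eq:lim-gen-funct}, I introduce the rescaled convex functions $G_t(x):=t^{-16/15}F(tx)$ for $t>0$ on an open interval around $x=1$. For any $x>0$ one has $tx\to\infty$ and hence $G_t(x)\to b\,x^{16/15}$ pointwise by~\eqref{eq:lim-gen-funct}; the limit is differentiable. A standard result in convex analysis (pointwise convergence of convex functions on an open set entails convergence of the derivatives at every point where the limit is differentiable; see, e.g., Theorem~25.7 of Rockafellar's \emph{Convex Analysis}) yields $G_t'(1)\to\tfrac{16}{15}b$, which, after noting that $G_t'(1)=t^{-1/15}F'(t)$, is precisely the one-variable asymptotic required in the first step.

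The main obstacle, as I see it, lies in the scaling reduction: the subsequential scaling limit $\mathbb{P}^h_{L,\xi}$ in the hypothesis is not assumed to be translation invariant and is left general in $\xi$, so one has to verify that the $L^{15/8}$-scaling valid at the lattice level actually passes to the limit and that the tilting identity~\eqref{eq:r-d-dwrivative} survives. Both facts follow from the convergence in distribution and the moment generating function estimates of~\cite{cgn2015,cgn2016}, but they must be combined carefully with the assumption on $\xi$. Once the scaling reduction is in place, the analytic part of the argument is a clean application of the convex-convergence theorem and does not rely on any further information about the law of $m_1$.
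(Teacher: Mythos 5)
Your argument is essentially correct but follows a genuinely different route from the paper's. The paper works directly with the $L\to\infty$ limit: it uses FKG monotonicity along dyadic scales to show that $\mu_{2^k,\pm}(h)=2^{-2k}\mathbb{E}^h_{2^k,\pm}(m_{2^k})$ converges, establishes the a priori bound $\mu_{L,+}(h)\leq \hat B_1 h^{1/15}$ from equation~(2.3) of~\cite{cgn2014}, and then identifies the limit by dominated convergence in the integrated identity $\frac{1}{L^2}\log\mathbb{E}_{L,\xi}(e^{hm_L})=\frac{1}{L^2}\int_0^h\mathbb{E}^t_{L,\xi}(m_L)\,\mathrm{d}t\to bh^{16/15}$ (Theorem~2.7 of~\cite{cgn2016}), finishing with an annulus argument for non-dyadic $L$. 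You instead use the exact scale covariance to trade $L\to\infty$ for $\tilde h\to\infty$ on the unit square and then differentiate the single asymptotic~\eqref{eq:lim-gen-funct} via convergence of derivatives of convex functions; this is clean, avoids the quantitative bound from~\cite{cgn2014}, the dominated-convergence step and the annulus argument, and the convex-analysis step (Rockafellar, Thm.~25.7 applied to $G_t(x)=t^{-16/15}F(tx)\to bx^{16/15}$) is a legitimate way to upgrade a log-MGF asymptotic to an asymptotic for its derivative, given that $F$ is finite and convex by~\eqref{eq:expo-tail}. The one place where your write-up is too thin is the case of general $\xi$: the scaling identity $\mathbb{E}^h_{L,\xi}(m_L)=L^{15/8}\mathbb{E}^{L^{15/8}h}_{1,\xi}(m_1)$ is only available for scale-covariant boundary conditions (the paper itself invokes it only for $\xi=\pm$ in the proof of Lemma~\ref{lemma_h-lim}), so for an arbitrary $\xi$ satisfying the hypothesis you should first run your argument for $\xi=+$ and $\xi=-$ and then sandwich $\mathbb{E}^h_{L,-}(m_L)\leq\mathbb{E}^h_{L,\xi}(m_L)\leq\mathbb{E}^h_{L,+}(m_L)$ by FKG, exactly as in the last line of the paper's proof; saying that the reduction "follows from convergence in distribution and MGF estimates" does not by itself cover non-homogeneous boundary conditions. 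With that sandwich added (and the routine justification, which you correctly cite, that the tilting identity~\eqref{eq:r-d-dwrivative} passes to the scaling limit), your proof is complete.
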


\begin{corollary} \label{corollary:bB}
The constants $B_1$ and $b$ of Proposition~\ref{prop:magnetization} and Lemma~\ref{lemma_limit} are related as follows:
\begin{align} \label{eq:bB}
    B_1 = \frac{16}{15} \, b \, .
\end{align}
\end{corollary}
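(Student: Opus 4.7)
The strategy is to sandwich the infinite-volume expectation $\mathbb{E}^h(m_L)$ between the two finite-volume expectations with extremal boundary conditions $\xi=\pm$ and then apply Lemma~\ref{lemma_limit} to each end of the sandwich, comparing the result with Proposition~\ref{prop:magnetization}.

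First, I would invoke the FKG-based stochastic ordering discussed after \eqref{ising-dist-h}: at the lattice level and for $H>0$, one has $\mathbb{Q}^{H,a}_{L,-} \preceq \mathbb{Q}^{H,a} \preceq \mathbb{Q}^{H,a}_{L,+}$. The infinite-volume measure is sandwiched between the extremal finite-volume measures because it arises as a monotone thermodynamic limit, and for $L' \geq L$ the restriction of $\mathbb{Q}^{H,a}_{L',+}$ to $\Lambda^a_L$ is FKG-dominated by $\mathbb{Q}^{H,a}_{L,+}$ (and symmetrically for minus). Since $m^a_L$ is an increasing functional of the spin configuration, this gives $\mathbb{E}^{H,a}_{L,-}(m^a_L) \leq \mathbb{E}^{H,a}(m^a_L) \leq \mathbb{E}^{H,a}_{L,+}(m^a_L)$. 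Passing to the scaling limit $a \to 0$ with $H = ha^{15/8}$, and invoking the convergence in distribution of $m^a_L$ from \cite{cgn2015,cgn2016} together with the moment-generating function bounds used in the proof of Proposition~\ref{prop:magnetization}, each expectation converges and the inequalities persist, yielding
\begin{equation*}
\mathbb{E}^h_{L,-}(m_L) \,\leq\, \mathbb{E}^h(m_L) \,\leq\, \mathbb{E}^h_{L,+}(m_L).
\end{equation*}

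Dividing by $L^2$ and letting $L \to \infty$, Lemma~\ref{lemma_limit} applied to both $\xi=+$ and $\xi=-$ (for both of which the prerequisite weak limit is established in \cite{cgn2015,cgn2016}) forces the two extremal members to converge to the common value $\tfrac{16}{15} b h^{1/15}$, so by the squeeze theorem
\begin{equation*}
\lim_{L\to\infty} \frac{\mathbb{E}^h(m_L)}{L^2} \,=\, \frac{16}{15}\, b\, h^{1/15}.
\end{equation*}
On the other hand, Proposition~\ref{prop:magnetization} tells us that $L^{-2} \mathbb{E}^h(m_L) = B_1 h^{1/15}$ for every $L>0$, so equating the two expressions at any fixed $h>0$ gives $B_1 = \tfrac{16}{15} b$.

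The only point requiring genuine care is the preservation of the sandwich inequality through the $a \to 0$ scaling limit; I expect this to be routine given the convergence of the relevant expectations and the uniform control furnished by the moment-generating function bounds, but it is the one place where one must be attentive. Everything else is a direct squeeze-and-compare.
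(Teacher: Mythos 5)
Your proof is correct, but it follows a different route from the paper's. The paper avoids the sandwich entirely: it introduces the random boundary condition $\eta_h$ obtained by sampling the spins on $\partial\Lambda^a_L$ from the infinite-volume measure $\mathbb{Q}^{H,a}$, observes that $\mathbb{Q}^{H,a}_{L,\eta_h}$ is then \emph{exactly} the restriction of $\mathbb{Q}^{H,a}$ to $\Lambda^a_L$, so that $\mathbb{E}^h_{L,\eta_h}(m_L)=\mathbb{E}^h(m_L)=B_1L^2h^{1/15}$ for every $L$ by Proposition~\ref{prop:magnetization}, and then applies Lemma~\ref{lemma_limit} to the boundary condition $\eta_h$ to identify the $L\to\infty$ limit of $L^{-2}\mathbb{E}^h_{L,\eta_h}(m_L)$ as $\tfrac{16}{15}bh^{1/15}$. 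Your version instead squeezes $\mathbb{E}^h(m_L)$ between the $\xi=\pm$ finite-volume expectations via FKG and applies Lemma~\ref{lemma_limit} only to $\xi=+$ and $\xi=-$. What your route buys is that you never need to verify the hypothesis of Lemma~\ref{lemma_limit} (existence of the weak limit of $\mathbb{P}^{ha^{15/8},a}_{L,\xi}$) for anything other than the extremal boundary conditions, for which it is established in \cite{cgn2015,cgn2016}; the cost is the extra step of carrying the two-sided inequality through the $a\to0$ limit, which, as you note, is handled by convergence in distribution plus the moment-generating-function bounds, exactly as in the proof of Proposition~\ref{prop:magnetization}. The paper's route is shorter but leans on the restriction identity $\mathbb{Q}^{H,a}_{L,\eta_h}=\mathbb{Q}^{H,a}\big\vert_{\Lambda^a_L}$ and on Lemma~\ref{lemma_limit} being applicable to a random boundary condition. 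Both arguments are sound; note also that, since the identity $B_1=\tfrac{16}{15}b$ is a statement about constants, it suffices to run your comparison at a single fixed $h>0$.
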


\begin{lemma} \label{lemma_h-lim}
For any $L>0$ and any boundary conditions $\xi$ such that $\mathbb{P}^{ha^{15/8},a}_{L,\xi}$ has a (subsequential) weak limit $\mathbb{P}^{h}_{L,\xi}$,
    \begin{align} \label{eq:Eh_bounds}
    \mathbb{E}^h_{L,\xi}(m_L) \sim B_1 L^2 h^{1/{15}}, \, \text{ as } h\to\infty \, .
\end{align}
\end{lemma}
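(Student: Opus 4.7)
The approach is to recognize $\mathbb{E}^h_{L,\xi}(m_L)$ as the $h$-derivative of the cumulant generating function
\[
F_{\xi}(h) := \log \mathbb{E}_{L,\xi}\bigl(e^{h m_L}\bigr),
\]
pin down the leading behavior of $F_{\xi}$ as $h\to\infty$ from the universal asymptotic \eqref{eq:lim-gen-funct} together with the scaling properties of the critical magnetization, and then transfer the asymptotic from $F_{\xi}$ to its derivative using convexity.

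First, passing the lattice identity \eqref{eq:r-d-dwrivative} to the continuum scaling limit gives, for every $h\geq 0$,
\[
\mathbb{E}^h_{L,\xi}(m_L) \;=\; \frac{\mathrm{d}}{\mathrm{d}h}\log\mathbb{E}_{L,\xi}\bigl(e^{h m_L}\bigr) \;=\; F_{\xi}'(h),
\]
and $F_{\xi}$ is convex and nondecreasing on $[0,\infty)$, being a cumulant generating function.

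Second, I would establish that $F_{\xi}(h) \sim b L^{2} h^{16/15}$ as $h\to\infty$, with $b$ the universal constant from \eqref{eq:lim-gen-funct}. For boundary conditions that are invariant under dilations (such as $+,-,0$, or periodic), the scaling properties of the critical magnetization used in the proof of Proposition~\ref{prop:magnetization} (through Section~4.2 of \cite{cjn2020}) imply that, under $\mathbb{P}_{L,\xi}$, $m_L$ has the same law as $L^{15/8} m_1$ under $\mathbb{P}_{1,\xi}$, so that
\[
F_{\xi}(h) \;=\; \log \mathbb{E}_{1,\xi}\!\bigl(e^{(L^{15/8} h) m_1}\bigr) \;\sim\; b\,(L^{15/8}h)^{16/15} \;=\; b L^{2} h^{16/15},
\]
by \eqref{eq:lim-gen-funct} and the universality of $b$. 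For other $\xi$ covered by the hypothesis, the same conclusion can be reached from the tail estimate \eqref{eq:expo-tail} (whose constant $K_L$ does not depend on $\xi$) via a Kasahara-type exponential Tauberian theorem, combined with the scaling relation $K_L = K_1/L^{30}$.

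Third, I would transfer the $h^{16/15}$ asymptotic from $F_{\xi}$ to $F_{\xi}'$ by a standard convexity argument. Convexity yields, for any $\epsilon \in (0,1)$ and $h>0$,
\[
\frac{F_{\xi}(h) - F_{\xi}(h(1-\epsilon))}{h\epsilon} \;\leq\; F_{\xi}'(h) \;\leq\; \frac{F_{\xi}(h(1+\epsilon)) - F_{\xi}(h)}{h\epsilon}.
\]
Dividing by $h^{1/15}$ and plugging in the equivalent from Step~2, both bounds converge as $h\to\infty$ to $b L^{2}\,\frac{(1+\epsilon)^{16/15}-1}{\epsilon}$ and $b L^{2}\,\frac{1-(1-\epsilon)^{16/15}}{\epsilon}$, respectively, each of which tends to $\tfrac{16}{15} b L^{2}$ as $\epsilon \to 0$. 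Hence $F_{\xi}'(h) \sim \tfrac{16}{15} b L^{2} h^{1/15}$, and Corollary~\ref{corollary:bB} ($B_1 = \tfrac{16}{15} b$) delivers the claimed equivalence $\mathbb{E}^h_{L,\xi}(m_L) \sim B_1 L^{2} h^{1/15}$. The main obstacle is Step~2 for boundary conditions that are not literally scale-invariant: Steps~1 and~3 are routine once the leading $F_{\xi}$ asymptotic is secured, but in the general case one must either appeal to the Tauberian route (and check uniformity of the constants in $\xi$) or adapt the proof of Proposition~2.2 of \cite{cgn2016} to arbitrary $L$ and admissible $\xi$.
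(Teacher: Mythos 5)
Your proposal is correct in substance but follows a genuinely different route from the paper. The paper does not touch the cumulant generating function in this proof: it uses the scaling identity $\mathbb{E}^h_{L,\xi}(m_L)=\frac1h\,\mathbb{E}^{1}_{L_h,\xi}(m_{L_h})$ with $L_h=h^{8/15}L$ to convert the $h\to\infty$ limit at fixed $L$ into the $L\to\infty$ limit at fixed field $h=1$, which is exactly the content of Lemma~\ref{lemma_limit}; combining with Corollary~\ref{corollary:bB} gives the constant $B_1$. Your argument instead stays at fixed $L$, derives $F_\xi(h)=\log\mathbb{E}_{L,\xi}(e^{hm_L})\sim bL^2h^{16/15}$ from \eqref{eq:lim-gen-funct} plus scaling, and differentiates this asymptotic via the standard convexity sandwich
\begin{align*}
\frac{F_{\xi}(h)-F_{\xi}(h(1-\epsilon))}{h\epsilon}\;\leq\;F_{\xi}'(h)\;\leq\;\frac{F_{\xi}(h(1+\epsilon))-F_{\xi}(h)}{h\epsilon}\,,
\end{align*}
which is airtight and has the advantage of bypassing the fairly involved proof of Lemma~\ref{lemma_limit} (monotone-in-$L$ sequences, the bound from~\cite{cgn2014}, the annulus decomposition). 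What the paper's route buys is a cleaner treatment of general $\xi$: the hypothesis of the lemma only assumes existence of the weak limit, and the paper first proves the claim for $\xi=\pm$ and then sandwiches every other boundary condition between them by FKG, since $\mathbb{E}^h_{L,-}(m_L)\leq\mathbb{E}^h_{L,\xi}(m_L)\leq\mathbb{E}^h_{L,+}(m_L)$ and the two extremes share the same asymptotics. Your proposed fallback for general $\xi$ --- a Kasahara-type Tauberian argument from \eqref{eq:expo-tail} --- does go in the valid (Abelian) direction, but \eqref{eq:expo-tail} is itself only established for a restricted family of boundary conditions, so it does not extend the coverage; you should replace that step by the FKG monotonicity argument, after which your proof is complete and arguably more self-contained than the paper's.
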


\begin{proof}
Assume that $\xi=+$ or $\xi=-$, then the scaling properties of the magnetization (see~\cite{cgn2016} and Section~4.2 of~\cite{cjn2020}) imply that, for every $\alpha>0$,
\begin{align} \label{eq:scaling-m_L}
    \mathbb{E}^h_{L,\xi}(m_L) = \alpha \, \mathbb{E}^{\alpha h}_{L/\alpha^{8/15},\xi}(m_{L/\alpha^{8/15}}) \, .
\end{align}
Taking $\alpha=1/h$, we get
\begin{align} \label{eq:L_h}
    \mathbb{E}^h_{L,\xi}(m_L) = \frac{1}{h} \, \mathbb{E}^{1}_{h^{8/15}L,\xi}(m_{h^{8/15}L}) \, .
\end{align}
Introducing the variable $L_h=h^{8/15}L$, we can write
\begin{align} \label{eq:LtoL_h}
    \frac{\mathbb{E}^h_{L,\xi}(m_L)}{L^2} = \frac{\mathbb{E}^{1}_{L_h,\xi}(m_{L_h})}{L_h^2} \, h^{1/15} \, ,
\end{align}
where, thanks to Lemma~\ref{lemma_limit},
\begin{align}
    \lim_{h \to \infty}\frac{\mathbb{E}^{1}_{L_h,\xi}(m_{L_h})}{L_h^2} = \frac{16}{15} \, b \, .
\end{align} 
This and~\eqref{eq:LtoL_h}, combined with~\eqref{eq:bB}, imply that, for $\xi=+$ and $\xi=-$,
\begin{align}
    \lim_{h\to\infty} \frac{\mathbb{E}^h_{L,\xi}(m_L)}{L^2 h^{1/{15}}} = \frac{16}{15} b = B_1 \, .
\end{align}
As an immediate consequence of the FKG inequality, the same limit is obtained with all boundary conditions $\xi$, which completes the proof.
\end{proof}

The average magnetization in an external field, studied above, can be used to express the moment-generating function, as follows. For all deterministic boundary conditions, including plus, minus and free, \eqref{eq:r-d-dwrivative} implies
\begin{align} \label{eq:ratio}
    \mathbb{E}^h_{L,\xi}(m_L)=\frac{\mathbb{E}_{L,\xi}\left(m_L\mathrm{e}^{hm_L}\right)}{\mathbb{E}_{L,\xi}\left(\mathrm{e}^{hm_L}\right)} = \frac{\mathrm{d}}{\mathrm{d}t}\log\mathbb{E}_{L,\xi}\left( e^{tm_L} \right) \Big\vert_{t=h} \, ,
\end{align}
so that
\begin{align} \label{eq:integral}
    \log\mathbb{E}_{L,\xi}\left(e^{h \, m_L}\right) = \int_0^h \mathbb{E}^t_{L,\xi}(m_L) \, \mathrm{d}t \, .
\end{align}

Introducing the function
\begin{align} \label{def:Delta}
\Delta_{L,\xi}(h):=\mathbb{E}^h_{L,\xi}(m_L)-\mathbb{E}^h(m_L) \, ,
\end{align}
which can be interpreted as the excess average magnetization in a square, in the presence of an external field, due to the boundary condition $\xi$, with the help of Proposition~\ref{prop:magnetization}, \eqref{eq:integral} can be written as
\begin{align} \label{eq:free-energy}
    \log\mathbb{E}_{L,\xi}\left( e^{hm_L} \right) = 
    \frac{15}{16} B_1 L^2 h^{16/15} + \int_0^h \Delta_{L,\xi}(t) \, \mathrm{d}t \, ,
\end{align}
which, combined with~\eqref{eq:bB}, gives
\begin{align} \label{eq:error-term}
    \mathbb{E}_{L,\xi}\left( e^{hm_L} \right) = \exp{\left(b L^2 h^{16/15} + \int_0^h \Delta_{L,\xi}(t) \, \mathrm{d}t \right)} \, .
\end{align}

We see that the integral of $\Delta_{L,\xi}$ tells us how much the moment-generating function deviates from $\exp{(bL^2h^{16/{15}})}$.
This motivates the following definition:
\begin{align} \label{def:funct-D_xi}
\mathcal{D}_{\xi}(L,h) := \int_0^{h} \Delta_{L,\xi}(t) \, \mathrm{d}t \, .
\end{align}
In the physics literature, this function is sometimes identified with the \emph{singular part of the finite-size free energy} (as in the seminal paper by Privman and Fisher \cite{pf1984}) or called the \emph{excess free energy}, especially in the literature on the Casimir effect (see, e.g., \cite{ddg2006}).\footnote{We thank Adam Ran\c{c}on for pointing this out to us.}

It is easy to see, using the same scaling properties of the magnetization (see Section~4.2 of~\cite{cjn2020}) leading to \eqref{eq:B_L-scaling}, that $\mathcal{D}_{\xi}(L,h)$ is really a function of a single variable:
\begin{align} \label{eq:single-variable}
\mathcal{D}_{\xi}(L,h) = \int_0^{h} L^{15/8}\Delta_{1,\xi}(L^{15/8}t) \, \mathrm{d}t = \int_0^{L^{15/8}h} \Delta_{\xi}(t) \, \mathrm{d}t = \mathcal{D}_{\xi}(1,L^{15/8}h) \, ,
\end{align}
where $\Delta_{\xi}(t):=\Delta_{1,\xi}(t)$.
Moreover, combined with the scaling properties of $m_L$ (see \cite{cgn2015}), \eqref{eq:error-term} and the definition of $\mathcal{D}_{\xi}(L,h)$, \eqref{eq:lim-gen-funct} immediately implies that
\begin{align} \label{eq:o}
    \lim_{h \to \infty}\frac{\mathcal{D}_{\xi}(L,h)}{h^{16/15}}=0 \, .
\end{align}

\section{Heuristic analysis of the density} \label{sec:proof-main-statement}
In this section, we will use the expression
\begin{align} \label{eq:expression}
    \mathbb{E}^h_{L,\xi}(m_L)=\frac{\mathbb{E}_{L,\xi}\left(m_L\mathrm{e}^{hm_L}\right)}{\mathbb{E}_{L,\xi}\left(\mathrm{e}^{hm_L}\right)}
\end{align}
to extract information about $\varphi_{L,\xi}$.
We will use a Laplace-type approximation, analogous to that leading to~\eqref{eq:Laplace-approx}, without studying the error, so the content of this section should be seen as a heuristic motivation for the introduction and study of the functions $\Delta_{L,\xi}$ and $\mathcal{D}_{\xi}$.

The behavior of the left-hand side of~\eqref{eq:expression} for large $h$ is given by Lemma~\ref{lemma_h-lim}, while the moment-generating function can be expressed using~\eqref{eq:error-term}, so we only need to study the remaining term, which, using \eqref{eq:density}, can be written as
\begin{align} \label{eq:explicit-form}
\mathbb{E}_{L,\xi}\left(m_L\mathrm{e}^{hm_L}\right)=\int_{-\infty}^{\infty} x \, \varphi_{L,\xi}(x) \, \exp\big({g_L(x)}\big) \, \mathrm{d}x \, ,
\end{align}
where
\begin{align} \label{def:g}
    g_L(x):=hx-C_Lx^{16} \, .
\end{align}
It is immediate to check that
\begin{align}
    x_0:=\left(\frac{h}{16C_L}\right)^{1/15}
\end{align}
is a critical point for $g_L$ and that $g_L''(x_0)<0$. Applying to the integral \eqref{eq:explicit-form} the same approximation used in~\eqref{eq:Laplace-approx} gives
\begin{align}
\begin{split} \label{eq:steepest}
   & \mathbb{E}_{L,\xi}\left(m_L\mathrm{e}^{hm_L}\right) \approx \sqrt{\frac{2\pi}{|g_L''(x_0)|}}\,x_0\,\varphi_{L,\xi}(x_0)\,\mathrm{e}^{g(x_0)} \\
   & \qquad = \frac{\tilde\gamma}{C_L^{1/10}} \frac{\varphi_{L,\xi}\left(\frac{h^{1/15}}{(16C_L)^{1/15}}\right)\exp\left(\frac{15/16}{(16C_L)^{1/15}}\,h^{16/15}\right)}{h^{7/15}} \, h^{1/15} \, ,
\end{split}
\end{align}
where
\begin{align}
    \tilde\gamma=16^{2/5}\,\sqrt{\frac{\pi}{120}} \, .
\end{align}
Combining \eqref{eq:ratio}, \eqref{eq:error-term} and \eqref{eq:steepest}, we obtain 
\begin{align} \label{eq:ilbestione}
\begin{split}
    & \mathbb{E}^h_{L,\xi}(m_L)\approx\frac{\tilde\gamma}{C_L^{1/10}} \, \frac{\varphi_{L,\xi}\left(\frac{h^{1/15}}{(16C_L)^{1/15}}\right)}{h^{7/15}} \, \exp\left(\left(\frac{15/16}{(16C_L)^{1/15}} - b L^2 - \frac{\mathcal{D}_{\xi}(L,h)}{h^{16/15}} \right) \, h^{16/15} \right) \, h^{1/15} \, .
\end{split}
\end{align}

Comparing~\eqref{eq:ilbestione} to~\eqref{eq:Eh_bounds} and using~\eqref{eq:o} shows that
\begin{align} \label{eq:bilanciamiento}
    \frac{15/16}{(16C_L)^{1/15}} = b L^2,
\end{align}
which lead to
\begin{align} \label{eq:C_L}
    C_L = \frac{15^{15}}{16^{16}} \frac{1}{b^{15}} \frac{1}{L^{30}} \, .
\end{align}
In fact, note that $\varphi_{L,\xi}$ cannot cancel the exponential in~\eqref{eq:ilbestione} as $h\to\infty$, because that would imply that $\varphi_{L,\xi}(x)$ behaves like $e^{-\tilde{C}_L \, x^{16}}$ as $x\to\infty$, which would simply amount to redefining the constant $C_L$ in~\eqref{eq:density}.
Therefore, \eqref{eq:ilbestione} becomes
\begin{align} \label{eq:labestiolina}
\begin{split}
    \mathbb{E}^h_{L,\xi}(m_L)\approx\gamma \, L^3 \, \frac{\varphi_{L,\xi}\left( B_1 L^2 h^{1/15} \right)}{e^{\mathcal{D}_{\xi}(L,h)} \, h^{7/15}} \, h^{1/15} \, ,
\end{split}
\end{align}
where
\begin{equation} \label{eq:gamma}
\gamma = \sqrt{\frac{2\pi}{15}} B_1^{3/2} \, .
\end{equation}

Comparing~\eqref{eq:labestiolina} to~\eqref{eq:Eh_bounds} suggests that
\begin{align}
 L\varphi_{L,\xi}\left( B_1 L^2 h^{1/15} \right) \text{ behaves like } e^{\mathcal{D}_{\xi}(L,h)} \, h^{7/15} \text{ for large } h \, ,   
\end{align}
which would imply that, whether or not $\varphi_{L,\xi}(x)$ behaves like $x^7$ for large $x$ depends on the asymptotic behavior of $\mathcal{D}_{\xi}(L,h)$ for large $h$. 
In particular, if 
\begin{align} \label{eq:integrability}
    \lim_{h\to\infty}\mathcal{D}_{\xi}(L,h)=\int_0^{\infty}\Delta_{L,\xi}(t)\mathrm{d}t
\end{align}
is finite, then \eqref{eq:labestiolina} suggests that $\varphi_{L,\xi}(x)$ behaves like $x^7$ as $x \to \infty$.
Motivated by this observation, in the next section, we will discuss the limit \eqref{eq:integrability} in the case of periodic boundary conditions.

\section{Periodic vs other boundary conditions} \label{sec:periodic}
In this section, we discuss the case of periodic boundary conditions on opposite sides of $\Lambda_L$.
Although scaling limit results for the Ising model are typically stated and proved for plus, minus or free boundary conditions \cite{cgn2015,cgn2016}, and not for periodic boundary conditions, similar results and proofs should apply to the latter case (see e.g. Remark~3.2 of~\cite{cgn2016}). Therefore, we assume the existence of a unique scaling limit for the magnetization for the two-dimensional critical Ising model with periodic boundary conditions, with an analytic density function and the same scaling properties as for other boundary conditions. We call this assumption \emph{hypothesis H}.

With periodic boundary conditions, the model is invariant under translations along the coordinate axes.
Translation invariance is a manifestation of the fact that imposing periodic boundary conditions means that there is no real boundary (in two dimensions, it corresponds to placing the model on a torus). Therefore, this type of boundary condition seems particularly favorable for ensuring that $\Delta_{L,\xi}(h)$ decays fast enough to be integrable, when $h\to\infty$. This is the content of the next theorem.

\begin{theorem} \label{thm:periodic}
Assume hypothesis H, then
\begin{align}
    \int_0^{\infty}\vert\Delta_{L,p}(t)\vert \, \mathrm{d}t<\infty \, .
\end{align}
\end{theorem}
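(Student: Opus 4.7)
The plan is to exploit translation invariance afforded by periodic boundary conditions, combined with the near-critical scaling properties of the magnetization, to reduce the integrability question to an asymptotic statement about the finite-size correction of the magnetization density at fixed external field $h=1$, and then to invoke Lemma~\ref{lemma-bounds} to control the tail.

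Under hypothesis H, the scaling limit $\mathbb{P}^h_{L,p}$ is translation invariant, so there is a magnetization density function $\mathcal{M}(L,h)$ such that $\mathbb{E}^h_{L,p}(m_L) = L^2 \, \mathcal{M}(L,h)$. Combining this with Proposition~\ref{prop:magnetization} and with the scaling relation
\begin{align*}
    \mathbb{E}^h_{L,p}(m_L) \;=\; \frac{1}{h}\,\mathbb{E}^1_{L_h,p}(m_{L_h}), \qquad L_h := L\, h^{8/15},
\end{align*}
which extends from plus/minus to periodic BC under hypothesis H, yields the clean identity
\begin{align*}
    \Delta_{L,p}(h) \;=\; L^2\, h^{1/15}\,\bigl[\mathcal{M}(L_h, 1) - B_1\bigr].
\end{align*}
The change of variables $L' = L_h$ then rewrites the integral of interest as
\begin{align*}
    \int_0^{\infty} \bigl|\Delta_{L,p}(t)\bigr| \, \mathrm{d}t \;=\; \frac{15}{8} \int_0^{\infty} L' \, \bigl|\mathcal{M}(L',1) - B_1\bigr| \, \mathrm{d}L',
\end{align*}
which, reassuringly, no longer depends on $L$, in agreement with~\eqref{eq:single-variable}.

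Continuity of $\mathcal{M}(\cdot,1)$ on $(0,\infty)$, a consequence of the analyticity of the limiting density under hypothesis H, together with the linear weight $L'$, handles integrability on any compact subinterval of $[0,\infty)$. What remains is to show that $|\mathcal{M}(L',1) - B_1|$ decays fast enough in $L'$ to absorb the weight $L'$ at infinity. Here I would invoke Lemma~\ref{lemma-bounds} with $\xi = p$: it should yield that, at external field $1$, finite-size corrections to the magnetization density decay rapidly — in fact exponentially, because the near-critical system at strictly positive field has a finite correlation length. Any such decay comfortably absorbs the linear weight and closes the argument.

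The main obstacle is precisely this last step. Lemma~\ref{lemma-bounds} is most naturally phrased for boundary conditions with a geometric boundary (plus, minus, free), whereas the periodic setup has none. Two viable routes suggest themselves: \emph{(i)} a coupling/comparison argument that sandwiches the torus between plus- and minus-BC systems on larger concentric boxes, combined with Lemma~\ref{lemma-bounds} applied to those and with translation invariance of $\mathcal{M}(\cdot,1)$ to control the averaged difference; or \emph{(ii)} a direct adaptation of the proof of Lemma~\ref{lemma-bounds} to the torus, exploiting exponential decay of connected spin correlations at $h > 0$. Both routes rest on the same underlying physical fact — a finite off-critical correlation length — that Lemma~\ref{lemma-bounds} packages in the scaling limit.
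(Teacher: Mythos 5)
Your argument is correct and matches the paper's: both rest on translation invariance of the torus (your magnetization density $\mathcal{M}$ plays the role of the paper's identity $\mathbb{E}^h_{L,p}(m_L)=4\,\mathbb{E}^h_{L,p}(m_{L/2})$, which reduces the full-box magnetization to a sub-box where Lemma~\ref{lemma-bounds} applies), and your change of variables to $L'$ is only a cosmetic reformulation of \eqref{eq:single-variable}. The ``main obstacle'' you flag is not one: the second bound \eqref{eq:small_difference_p} of Lemma~\ref{lemma-bounds} is stated precisely for $\xi=p$ under hypothesis H (and is proved essentially by your route \emph{(i)}, comparing the torus with plus boundary conditions via an FKG/circuit argument), so the exponential decay you need is already available.
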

\begin{proof}
    Translation invariance implies that $\mathbb{E}^h_{L,p}(m_{L})=4\mathbb{E}^h_{L,p}(m_{L/2})$ and $\mathbb{E}^h(m_{L})=4\mathbb{E}^h(m_{L/2})$. Combined with Lemma~\ref{lemma-bounds} below, this implies that, for all $h$ sufficiently large,
    \begin{align}
        \vert\Delta_{L,p}(h)\vert = \big\vert\mathbb{E}^h_{L,p}(m_L)-\mathbb{E}^h(m_L)\big\vert = 4\big\vert\mathbb{E}^h_{L,p}(m_{L/2})-\mathbb{E}^h(m_{L/2})\big\vert \leq 2\hat{B}_1 L^2 h^{1/15} e^{-C(1)Lh^{15/8}} \, ,
    \end{align}
    which concludes the proof.
\end{proof}

The key element in the proof of Theorem~\ref{thm:periodic} is the combination of translation invariance with Lemma~\ref{lemma-bounds}, which essentially says that, for all sufficiently large values of $h$, the effect of the boundary is exponentially small at distance $L/2$.
Lemma~\ref{lemma-bounds} is also valid when the boundary condition is plus, but in that case there is no translation invariance.

The proof of Lemma~\ref{lemma-bounds} is given in Section~\ref{sec:proofs}, but we point out that it is based on two-dimensional arguments, which break down in higher dimensions.
However, heuristic considerations (as well as the upper bounds in Lemma~\ref{lemma-bounds}, if one recalls that the correlation length in two dimensions is of order $h^{-8/15}$ \cite{cjn2020}) suggest that the effect of the boundary may be negligible (exponentially small) at any distance much larger than the correlation length, and that this may be true for all dimensions $d \geq 2$.
In that case, for periodic boundary conditions, some version of Theorem~\ref{thm:periodic} should probably be true for all dimensions $d>2$, and this should guarantee the validity of~\eqref{eq:prediction}, as we argued at the end of the previous section.
In~\cite{bdr2024}, substantial evidence is provided for the validity of~\eqref{eq:prediction} in three dimensions with periodic boundary conditions.

For other boundary conditions, how fast $\Delta_{L,\xi}(h)$ decays when $h\to\infty$ should depend on the effect of the boundary within a distance of the order of the correlation length. 
This seems a very subtle question, but one can at least conjecture that the influence of the boundary should increase with the number of dimensions, possibly slowing down the decay of $\Delta_{L,\xi}(h)$.
In~\cite{bdr2024}, it is argued that, for the three-dimensional Ising model with free boundary conditions, the prediction~\eqref{eq:prediction} is invalid due to an additional correction to the leading order, which hides the term $x^{\frac{\delta-1}{2}}$.

To gain an intuitive understanding of the influence of the boundary, one can think of two extreme cases: $d=1$, where the boundary consists of only two vertices, and a regular tree, where the boundary contains a number of vertices proportional to the ``volume.''
It is well known \cite{Runnels67,Eggarter74,MHZ74} that, in the case of the Ising model on the complete Cayley tree, the partition function contains contributions from both sites deep within the tree, and sites close to or on the boundary. The contribution from the latter is not negligible, even in the thermodynamic limit.
This model could be an interesting playground to study the effect of the boundary on the distribution of the magnetization in high dimensions.

\section{Higher dimensions} \label{sec:discussion}

An interesting aspect of the argument of Section~\ref{sec:proof-main-statement} showing the relation between a power-law correction and the sensitivity of a finite system to a change of boundary conditions is that it is essentially independent of the dimension.
The two-dimensional case is special because there are many rigorous results, such as those mentioned in Section~\ref{sec:defres} and used in Section~\ref{sec:proof-main-statement}.
However, similar results are expected to hold in higher dimensions with different critical exponents.
For example, one of the main tools in the analysis carried out in this paper, used repeatedly, is scale invariance, which is expected to be a feature of criticality in all dimensions $d \geq 2$.

In dimension $d \geq 3$, assuming the existence of the scaling limit of the magnetization (when appropriately rescaled), in view of~\eqref{eq:Eh_bounds}, it is natural to expect that\footnote{Note that $d=4$ corresponds to the upper critical dimension for the Ising model, so one should expect logarithmic corrections (see~\cite{wr1973}).}
\begin{align} \label{eq:d_lim}
    \lim_{h\to\infty}\frac{\mathbb{E}^h_{L,\xi}(m_L)}{L^d h^{1/\delta}}=B_1(d) \, ,
\end{align}
for some constant $B_1(d)$ depending on the dimension $d$ and with the appropriate critical exponent $\delta$ ($\delta \approx 4.79$ for $d=3$ and $\delta=3$, the mean-field value, for $d>4$; $d=4$ corresponds to the upper critical dimension, therefore one expects to have the mean-field critical exponent, $\delta=3$, with logarithmic corrections~\cite{wr1973,bbs2014}).
If, in addition, one could show that
\begin{align}
\mathbb{E}^h(m_L)=B_1(d) \, L^d \, h^{1/\delta} \, ,    
\end{align}
then, dropping $d$ from $B_1(d)$ for notational convenience, \eqref{eq:ratio}, \eqref{eq:integral}, \eqref{def:Delta}, \eqref{def:funct-D_xi} and \eqref{eq:d_lim} would lead to
\begin{align}
     \lim_{h\to\infty}\frac{\mathbb{E}_{L,\xi}\left(m_L e^{hm_L}\right)\exp{\left(-\frac{\delta}{1+\delta} B_1 L^d h^{(\delta+1)/\delta}-\mathcal{D}_{L,\xi}(h) \right)}}{L^d h^{1/\delta}} = B_1 \, .
\end{align}

Assuming an analytic density for the magnetization in $d$ dimensions whose leading behavior as $x\to\infty$ is of the form $e^{-C_L x^{\delta+1}}$, with a dimension-dependent $C_L=C_L(d)$, and replacing \eqref{def:g} with
\begin{align}
    g_L(x):=hx-C_Lx^{\delta+1} \, ,
\end{align}
the Laplace approximation would suggest that
\begin{align}
L^{d/2}\varphi_{L,\xi}\left(B_1\,L^d\,h^{1/\delta}\right) \text{ behaves like } e^{\mathcal{D}_{\xi}(L,h)} h^{(\delta-1)/(2\delta)}  \text{ for large } h \, .
\end{align}
If $\mathcal{D}_{\xi}(L,h)$ has a finite limit as $h\to\infty$, this would imply that $\varphi_{L,\xi}(x)$ behaves like $x^{\frac{\delta-1}{2}}$ as $x \to \infty$.
However, if $\mathcal{D}_{\xi}(L,h)$ does not have a finite limit as $h\to\infty$, its asymptotic behavior may hide the term $x^{\frac{\delta-1}{2}}$. This is presumably what happens in three dimensions with free boundary conditions \cite{bdr2024}.

At this point, it is useful to discuss the validity of the assumption that the magnetization has an analytic density whose leading behavior is of the form $e^{-C_L x^{\delta+1}}$, as $x\to\infty$. We already pointed out that one should expect logarithmic corrections when $d=4$, the Ising upper critical dimension. In dimensions $d>4$, the nature of the scaling limit of the critical Ising magnetization is known to depend on the boundary conditions: Gaussian for free boundary conditions~\cite{Aizenman1982,Frohlich1982}, non-Gaussian for periodic boundary conditions~\cite{LPS2025}.
This marked difference is due to the influence of the boundary conditions on the finite-size critical behavior in dimensions
$d>4$, which has been studied extensively in the mathematics and physics literature for Ising and related models (see, for instance, \cite{cjn2021,LPS2025} and references therein).

For all dimensions $d>4$, it is expected that $\delta=3$, the mean-field value, which rules out a leading behavior of the form $e^{-C_L x^{\delta+1}}$, as $x\to\infty$, in the case of free boundary conditions, when the limit is Gaussian.
On the other hand, a behavior of the type $e^{-C_L x^4}$, as $x\to\infty$, is not ruled out for the case of periodic boundary conditions and is, moreover, consistent with the density of the renormalized magnetization in the critical Curie-Weiss model (the Ising model on the complete graph), which is known to converge in distribution to a random variable with density $c_1e^{-c_2 x^4}$ \cite{SG1973}.
Interestingly, one of the main conclusions of~\cite{pap2006} is that the behavior of macroscopic bulk quantities in the high-dimensional Ising model with periodic boundary conditions “parallels more closely the complete graph paradigm." 
For example, a lower bound of order $n^{d/2}$, the order of the critical susceptibility in the Curie-Weiss model, was proved in~\cite{pap2006} for the critical susceptibility of a $d$-dimensional $n \times n$ torus with $d>4$. A matching upper bound was recently proved in~\cite{LPS2025}, showing that the susceptibility of the critical Ising model on a high-dimensional torus does indeed behave like that of the critical Curie-Weiss model (see also~\cite{cjn2022}).

Based on these considerations, it is natural to conjecture that, in dimensions $d>4$, with periodic boundary conditions, the magnetization has an analytic density whose leading behavior is of the form $e^{-C_L x^{\delta+1}}$, as $x\to\infty$.
This would imply that the considerations in this section are relevant to the case of periodic boundary conditions in $d>4$.

In three dimensions, very little is known rigorously, but a recent study~\cite{bdr2024} provides evidence for the conclusion that the probability density of the magnetization decays like $e^{-C_L x^{\delta}}$ as $x\to\infty$, regardless of the boundary conditions, with a power-law correction $x^{\frac{\delta-1}{2}}$ for some (e.g., periodic), but not other (e.g., free), boundary conditions.
This suggests that the considerations in this section are also relevant to the case of the critical Ising model in three dimensions.

\section{Proofs of two key lemmas and a corollary} \label{sec:proofs}

\begin{proof}[Proof of Lemma~\ref{lemma_limit}]
Let
\begin{align}
    \mu_{L,\xi}(h):=\frac{\mathbb{E}^h_{L,\xi}(m_L)}{L^2} \, .
\end{align}
It is easy to see, using the FKG inequality and corresponding stochastic domination, that
\begin{align}
    \frac{\mathbb{E}_{2^{k+1},+}(m_{2^{k+1}})}{2^{2k+2}} \leq \frac{\mathbb{E}_{2^{k},+}(m_{2^{k}})}{2^{2k}} \, .
\end{align}
Therefore, the sequence $\mu_{2^k,+}(h)$ is decreasing.
Since $\mathbb{E}^h_{L,+}(m_L) \geq \mathbb{E}(m_L) = 0$ by symmetry and the FKG inequality, the sequence $\mu_{2^k,+}(h)$ is bounded from below, which implies that it has a limit, as $k\to\infty$, which we denote by $\mu_{+}(h)$.
A similar argument shows that the sequence $\mu_{2^k,-}(h)$ is increasing.
Moreover, since $\mathbb{E}^h_{L,-}(m_L) \leq \mathbb{E}^h_{L,+}(m_L)$ by the FKG inequality, the sequence $\mu_{2^k,-}(h)$ is bounded from above, which implies that it has a limit, as $k\to\infty$, which we denote by $\mu_{-}(h)$.

Next, we derive an upper bound for $\mu_{L,+}(h)$.
Recalling that $H=ha^{15/8}$ and using equation~(2.3) of~\cite{cgn2014} (note that in~\cite{cgn2014} $a=1$, so $H=h$, and $F(H)=\mathbb{E}^{a,H}_{\ell,+}(\sum_{x\in\Lambda^a_{\ell}}S_x$), in our notation, where one can think of $L$ in~(2.3) of~\cite{cgn2014} as $\ell/a$), we have that, for all $h>0$, all $\ell>0$ and all $a$ sufficiently small,
\begin{align}
\begin{split} \label{eq:discrete-boud}
    & \mathbb{E}^{a,H}_{\ell,+}(m^a_{\ell}) = a^{15/8} \, \mathbb{E}^{a,H}_{\ell,+}\left(\sum_{x\in\Lambda^a_{\ell}}S_x \right) \leq a^{15/8} \, C \, \left[\left(\frac{\ell}{a}\right)^{15/8} + H \left(\frac{\ell}{a}\right)^{15/4}\right] = C \, \big(\ell^{15/8} + h \, \ell^{15/4}\big) \, ,
\end{split}
\end{align}
where $C>0$ is the constant in equation~(2.3) of~\cite{cgn2014}.
The convergence of $m^a_{\ell}$ to $m_{\ell}$ in distribution, as $a \to 0$, combined with bounds on the moment generating function (see Proposition~3.5 of~\cite{cgn2015} or Proposition~2.1 of~\cite{cgn2016}), implies that
\begin{align}
    \lim_{a \to 0}\mathbb{E}^{a,H}_{\ell,+}(m^a_{\ell}) = \mathbb{E}^{h}_{\ell,+}(m_{\ell}) \, .
\end{align}
Therefore, taking the limit as $a \to 0$ of~\eqref{eq:discrete-boud}, we have that, for all $h, \ell>0$,
\begin{align} \label{eq:l-bound}
    \mathbb{E}^{h}_{\ell,+}(m_{\ell}) \leq C \, \big(\ell^{15/8} + h \, \ell^{15/4}\big) \, .
\end{align}

Now assume that $L \geq h^{-8/15}$, so that $[-L/2,L/2]^2$ can be partitioned into equal squares of side length $\ell = s \, h^{-8/15}$ for some $1 \leq s < 2$.
For this choice of $\ell$, the FKG inequality
and~\eqref{eq:l-bound} imply that
\begin{align}
    \frac{\mathbb{E}^h_{L,+}(m_L)}{L^2} \leq \frac{1}{L^2} \, \left(\frac{L}{\ell}\right)^2\mathbb{E}^h_{\ell,+}(m_{\ell}) = \frac{\mathbb{E}^h_{\ell,+}(m_{\ell})}{\ell^2} \leq C \, (s^{-1/8}+s^{7/4}) \, h^{1/15} \, .
\end{align}
Since $1 \leq s < 2$, there is a universal positive constant $\hat{B}_1<\infty$ such that
\begin{align} \label{eq:upper-bound}
    0\leq\frac{\mathbb{E}^h_{L,+}(m_L)}{L^2} \leq \hat{B}_1 \, h^{1/15} \, .
\end{align}

Using the scaling properties of $m_L$ and the FKG inequality, for all $L \geq 1$, we also have that
\begin{align} \label{eq:double-bound}
    \mathbb{E}_{1,-}(m_1)\leq\frac{\mathbb{E}_{1,-}(m_1)}{L^{1/8}}\leq\frac{\mathbb{E}_{L,-}(m_L)}{L^2}\leq\frac{\mathbb{E}^h_{L,-}(m_L)}{L^2}\leq\frac{\mathbb{E}^h_{L,+}(m_L)}{L^2} \leq \hat{B}_1 \, h^{1/15} \, .
\end{align}
Therefore, if $\xi=+$ or $-$, using Theorem~2.7 of~\cite{cgn2016}, \eqref{eq:integral}, \eqref{eq:upper-bound}, \eqref{eq:double-bound} and the dominated convergence theorem, we have that
\begin{align}
    bh^{16/15}=\lim_{k\to\infty}\frac{1}{2^{2k}}\log\mathbb{E}_{2^k,\xi}(e^{hm_{2^k}})=\lim_{k\to\infty}\frac{1}{2^{2k}}\int_0^h\mathbb{E}^t_{2^k,\xi}(m_{2^k}) \, \mathrm{d}t = \int_0^h \mu_{\xi}(t) \, \mathrm{d}t \, ,
\end{align}
which implies that
\begin{align}
    \mu_{\xi}(h)=\frac{16}{15} b h^{1/15} \, .
\end{align}

In order to deal with generic sequences $L\to\infty$, where $L$ is not necessarily of the form $2^k$, we can follow the strategy used in the proof of Lemma~2.11 of~\cite{cgn2016}.
More precisely, we can divide $\Lambda_L$ into the square $\Lambda_{2^k}$, with the largest $k$ such that $\Lambda_{2^k}\subset\Lambda_L$, and the annulus $\Lambda_L\setminus\Lambda_{2^k}$ (see Figure~1 of~\cite{cgn2016}).
With an argument similar to that in the proof of Lemma~2.11 of~\cite{cgn2016} or to those above, it is now easy to see that, if we call $m_{ann}$ the magnetization in the annulus, for both $\xi=+$ and $\xi=-$,
\begin{align}
    \lim_{L\to\infty}\frac{\mathbb{E}^h_{L,\xi}(m_{ann})}{L^2}=0 \, .
\end{align}
Therefore, we can conclude that
\begin{align}
    \lim_{L\to\infty}\frac{1}{L^2}\log\mathbb{E}_{L,-}(m_L)=\lim_{L\to\infty}\frac{1}{L^2}\log\mathbb{E}_{L,+}(m_L)=\frac{16}{15} b h^{1/15} \, .
\end{align}
As an immediate consequence of the FKG inequality, the same limit is obtained with all boundary conditions $\xi$, which completes the proof.
\end{proof}

\begin{proof}[Proof of Corollary~\ref{corollary:bB}]
Consider the Ising model in $\Lambda^a_L$ with boundary conditions on $\partial\Lambda^a_L$ chosen according to the infinite-volume distribution $\mathbb{Q}^{H,a}$ with $H=ha^{15/8}$.
We denote this boundary condition by $\eta_h$.
It is clear that
\begin{align} \label{eq:restriction}
    \mathbb{Q}^{H,a}_{L,\eta_h} = \mathbb{Q}^{H,a} \big\vert_{\Lambda^a_L} \, ,
\end{align}
where $\mathbb{Q}^{H,a} \big\vert_{\Lambda^a_L}$ is the measure $\mathbb{Q}^{H,a}$ restricted to spin configurations in $\Lambda^a_L$.
This implies that the law of the magnetization $m^a_L$ induced by $\mathbb{Q}^{H,a}_{L,\eta_h}$ converges, as $a \to 0$, to $\mathbb{P}_{L,\eta_h}^{h}\equiv\mathbb{P}^{h}$. Therefore,
\begin{align}
    \mathbb{E}^{h}_{L,\eta_h}(m_L) = \mathbb{E}^{h}(m_L) = B_1 L^2 h^{1/15} \, .
\end{align}
Combined with Lemma~\ref{lemma_limit}, this gives the relation $B_1=\frac{16}{15}b$.
\end{proof}

\begin{lemma} \label{lemma-bounds}
For any $\ell \leq L/2$ and all $h>0$ sufficiently large,
\begin{align} \label{eq:small_difference_+}
    0 \leq \mathbb{E}^h_{L,+}(m_{\ell})-\mathbb{E}^h(m_{\ell}) \leq \hat{B}_1 \ell^2 h^{1/15} e^{-C(1)Lh^{8/15}} \, ,
\end{align}
where $\hat{B}_1>0$ and $C(1)>0$ are the constants appearing in~\eqref{eq:upper-bound} and in Proposition~3.3 of~\cite{cjn2020}, respectively.
Under the same conditions, if hypothesis H is satisfied, we also have
\begin{align} \label{eq:small_difference_p}
    \big\vert\mathbb{E}^h_{L,p}(m_{\ell})-\mathbb{E}^h(m_{\ell})\big\vert \leq 2\hat{B}_1 \ell^2 h^{1/15} e^{-C(1)Lh^{8/15}} \, .
\end{align}
\end{lemma}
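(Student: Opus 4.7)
The plan is to combine the FKG inequality with the exponential decay of correlations provided by Proposition~3.3 of~\cite{cjn2020}, which quantifies the fact that in the near-critical scaling regime the correlation length has order $h^{-8/15}$. Since $\ell \leq L/2$, every point of $\Lambda_\ell$ lies at distance at least $(L-\ell)/2 \geq L/4$ from $\partial\Lambda_L$, so the influence of the boundary on any observable supported in $\Lambda_\ell$ should decay exponentially with rate proportional to $Lh^{8/15}$.

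For~\eqref{eq:small_difference_+}, the lower bound $\mathbb{E}^h_{L,+}(m_\ell) \geq \mathbb{E}^h(m_\ell)$ is immediate from FKG monotonicity: plus boundary conditions stochastically dominate any other Gibbs state (including the infinite-volume one, which is obtained as the decreasing limit of $\mathbb{E}^h_{L',+}$ as $L' \to \infty$), so the spin expectations satisfy the same inequality pointwise. For the upper bound I would work at the lattice level (with bounds uniform in the spacing $a$), decompose the difference of magnetizations as a sum over sites, and bound each per-site boundary contribution by Proposition~3.3 of~\cite{cjn2020} combined with the per-area magnetization estimate~\eqref{eq:upper-bound}. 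Passing to the scaling limit and summing over $\Lambda_\ell$ yields the bound $\hat{B}_1 \ell^2 h^{1/15} e^{-C(1)Lh^{8/15}}$.

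For~\eqref{eq:small_difference_p}, I would use the FKG sandwich. Since the ferromagnetic Ising measure satisfies FKG on the torus as well, a standard stochastic-dominance argument (performed at the lattice level and transferred to the scaling limit under hypothesis~H) gives $\mathbb{E}^h_{L,-}(\sigma_x) \leq \mathbb{E}^h_{L,p}(\sigma_x), \mathbb{E}^h(\sigma_x) \leq \mathbb{E}^h_{L,+}(\sigma_x)$, and therefore
\begin{align*}
\bigl|\mathbb{E}^h_{L,p}(m_\ell) - \mathbb{E}^h(m_\ell)\bigr| \leq \mathbb{E}^h_{L,+}(m_\ell) - \mathbb{E}^h_{L,-}(m_\ell) = \bigl[\mathbb{E}^h_{L,+}(m_\ell) - \mathbb{E}^h(m_\ell)\bigr] + \bigl[\mathbb{E}^h(m_\ell) - \mathbb{E}^h_{L,-}(m_\ell)\bigr].
\end{align*}
The first bracket is controlled by~\eqref{eq:small_difference_+}. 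The second is handled by rerunning the previous argument with $\xi = -$ in place of $\xi = +$: the same Proposition~3.3 of~\cite{cjn2020} applies, and an analogue of~\eqref{eq:upper-bound} for minus boundary conditions follows by FKG comparison with $\xi = +$. Adding the two contributions produces the factor $2\hat{B}_1$.

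The main obstacle will be extracting from Proposition~3.3 of~\cite{cjn2020} not only the correct exponential rate $C(1)Lh^{8/15}$ but also the correct prefactor, so that combined with~\eqref{eq:upper-bound} it reproduces the $\hat{B}_1 h^{1/15}$ scale at the per-site level. A secondary subtlety is justifying that the $\pm$/periodic stochastic-dominance sandwich survives the scaling limit under hypothesis~H; this should follow from the lattice FKG comparison together with the convergence of moment-generating functions established in~\cite{cgn2015,cgn2016}.
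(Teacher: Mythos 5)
Your outline has the right shape (FKG for the lower bound, exponential smallness of the boundary influence beyond the correlation length $h^{-8/15}$, a stochastic-domination sandwich for the periodic case), but the step that actually constitutes the proof is missing. Proposition~3.3 of~\cite{cjn2020} is a statement about the probability of a connection/circuit event $G(a,L_h)$ for the FK measure with a ghost field at $h=1$; it does not by itself bound any ``per-site boundary contribution'' to a difference of spin expectations. The mechanism the paper uses to convert it into the desired estimate is a circuit-conditioning argument: first rescale (taking $\alpha=1/h$ in the scaling relation) so that the field is $1$ and the domains become $L_h=h^{8/15}L$, $\ell_h=h^{8/15}\ell$; then, via the Edwards--Sokal coupling, the event $R$ that the annulus $\Lambda_{L_h}\setminus\Lambda_{L_h/2}$ contains a circuit of $+1$ spins has probability at least $1-e^{-C(1)L_h}$; on $R$, conditioning on the outermost such circuit and using FKG, the magnetization in $\Lambda_{\ell_h}$ under the infinite-volume measure dominates that under the plus measure on $\Lambda_{L_h}$, while off $R$ one uses a trivial nonnegative lower bound. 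This yields $\mathbb{E}^{1}(m_{\ell_h})\geq \mathbb{E}^{1}_{L_h,+}(m_{\ell_h})\left(1-e^{-C(1)L_h}\right)$, hence $\mathbb{E}^h_{L,+}(m_\ell)-\mathbb{E}^h(m_\ell)\leq \mathbb{E}^h_{L,+}(m_\ell)\,e^{-C(1)Lh^{8/15}}$, and then~\eqref{eq:upper-bound} (applied after shrinking to $\Lambda_\ell$ by FKG) gives the prefactor $\hat{B}_1\ell^2 h^{1/15}$. Without this conditioning step your sketch asserts the conclusion (``the influence of the boundary decays exponentially'') rather than deriving it, and the rescaling to $h=1$, which is what makes Proposition~3.3 applicable at all, is absent.

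The periodic case has a second, subtler problem. Your decomposition $\bigl|\mathbb{E}^h_{L,p}-\mathbb{E}^h\bigr|\leq\bigl[\mathbb{E}^h_{L,+}-\mathbb{E}^h\bigr]+\bigl[\mathbb{E}^h-\mathbb{E}^h_{L,-}\bigr]$ requires a bound on $\mathbb{E}^h-\mathbb{E}^h_{L,-}$, and ``rerunning the previous argument with $\xi=-$'' does not work symmetrically: with $h>0$ the external field suppresses circuits of $-1$ spins, so the minus-circuit analogue of $R$ is not a high-probability event and the conditioning argument collapses. One can still salvage a bound on $\mathbb{E}^h_{L,+}-\mathbb{E}^h_{L,-}$, but only by again looking for a \emph{plus} circuit, this time under the minus-boundary (or periodic) measure, which is exactly why the paper instead compares $\mathbb{E}^h_{L,p}$ directly to $\mathbb{E}^h_{L,+}$ (Proposition~3.3 holds for all FK boundary conditions) and applies the triangle inequality with both differences anchored at $\mathbb{E}^h_{L,+}$. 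As written, your periodic step rests on a symmetry that is broken by the field.
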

\begin{proof}
The scaling properties of the magnetization (see Section~4.2 of~\cite{cjn2020}) imply that, for every $\alpha>0$,
\begin{align}
    \mathbb{E}^h_{L,\xi}(m_{\ell}) = \alpha \, \mathbb{E}^{\alpha h}_{L/\alpha^{8/15},\xi}(m_{\ell/\alpha^{8/15}}) \, .
\end{align}
In particular, taking $\alpha=1/h$ and introducing the variables  $L_h=h^{8/15}L$ and $\ell_h=h^{8/15}\ell$, we have
\begin{align} \label{eq:reduction}
    \mathbb{E}^h_{L,+}(m_{\ell}) = \frac{1}{h} \, \mathbb{E}^{1}_{L_h,+}(m_{\ell_h}) \, \text{ and } \, \mathbb{E}^h(m_{\ell}) = \frac{1}{h} \, \mathbb{E}^{1}(m_{\ell_h}) \, .
\end{align}

Now let $R(a,L_h)$ denote the event that a spin configuration in $(\Lambda_{L_h}\setminus\Lambda_{L_h/2}) \cap a\mathbb{Z}^2$ contains a circuit of $+1$ spins winding around $\Lambda_{L_h/2}$.
The Edwards-Sokal coupling shows that this event is implied by the event $G(a,L_h)$ described in Definition~3.2 of~\cite{cjn2020}.
Moreover, if we let $\bar{\mathbb{Q}}^{a^{15/8},a}$ denote the FK measure on $a\mathbb{Z}^2$ at the critical (inverse) temperature with external field $H=a^{15/8}$ (see Section~2.1 of~\cite{cjn2020} for precise definitions), according to Proposition~3.3 of~\cite{cjn2020}, we have
\begin{align} \label{eq:FKbound}
    \bar{\mathbb{Q}}^{a^{15/8},a}(G^c(a,L_h) \leq e^{-C(1)L_h} \, ,
\end{align}
uniformly in $a \to 0$, where $G^c(a,L_h)$ denotes the complement of $G(a,L_h)$ and $C(1)$ is a universal constant.
The FKG inequality implies that
\begin{align}
    \mathbb{E}^{a^{15/8},a}(m^a_{\ell_h}) \geq \mathbb{E}^{a^{15/8},a}_{L_h,+}(m^a_{\ell_h}) \mathbb{Q}^{a^{15/8},a}(R(a,L_h)) + \mathbb{E}^{a^{15/8},a}_{\ell_h,-}(m^a_{\ell_h}) \left(1-\mathbb{Q}^{a^{15/8},a}(R(a,L_h))\right) \, .
\end{align}
Using the Edwards-Sokal coupling (see \cite{cjn2020,cjn2020bis}) and \eqref{eq:FKbound} and letting $a \to 0$, we obtain
\begin{align}
    \mathbb{E}^{1}(m_{\ell_h}) \geq \mathbb{E}^{1}_{L_h,+}(m_{\ell_h}) \left(1-e^{-C(1)L_h}\right) + \mathbb{E}^{1}_{\ell_h,-}(m_{\ell_h}) \left(1-\mathbb{Q}^{1}(R(L_h))\right) \, ,
\end{align}
where $\mathbb{Q}^{1}(R(L_h)):=\limsup_{a \to 0}\mathbb{Q}^{a^{15/8},a}(R(a,L_h)) \leq 1$.
Lemma~\ref{lemma_limit} implies that $\mathbb{E}^{1}_{\ell_h,-}(m_{\ell_h})>0$ for $h$ sufficiently large, therefore the last inequality leads to
\begin{align}
    \mathbb{E}^{1}(m_{\ell_h}) \geq \mathbb{E}^{1}_{L_h,+}(m_{\ell_h}) \left(1-e^{-C(1)L_h}\right)
\end{align}
for all $h$ sufficiently large. Using this and \eqref{eq:reduction}, we conclude that
\begin{align}
\begin{split}
    \mathbb{E}^{h}(m_{\ell}) = \frac{1}{h}\mathbb{E}^{1}(m_{\ell_h}) \geq \frac{1}{h}\mathbb{E}^{1}_{L_h,+}(m_{\ell_h}) \left(1-e^{-C(1)L_h}\right) = \mathbb{E}^{h}_{L,+}(m_{\ell}) \left(1-e^{-C(1)Lh^{8/15}}\right) \, .
\end{split}
\end{align}
Since $\mathbb{E}^{h}_{L,+}(m_{\ell}) \geq \mathbb{E}^{h}(m_{\ell})$, using the FKG inequality and~\eqref{eq:upper-bound}, this leads to
\begin{align}
\begin{split} \label{eq:lowerboundinfvol}
    0 \leq \mathbb{E}_{L,+}^{h}(m_{\ell}) - \mathbb{E}^{h}(m_{\ell}) \leq \mathbb{E}^{h}_{L,+}(m_{\ell}) e^{-C(1)Lh^{8/15}} \leq \mathbb{E}^{h}_{\ell,+}(m_{\ell}) e^{-C(1)Lh^{8/15}} \leq \hat{B}_1 \ell^2 h^{1/15} e^{-C(1)Lh^{8/15}} \, ,
\end{split}
\end{align}
which proves the first part of the lemma.

Since Proposition~3.3 of~\cite{cjn2020} applies to all types of boundary conditions on the FK model used in the Edwards-Sokal coupling, the same argument shows that
\begin{align}
\begin{split} \label{eq:lowerboundperiodic}
    0 \leq \mathbb{E}_{L,+}^{h}(m_{\ell}) - \mathbb{E}_{L,p}^{h}(m_{\ell}) \leq \hat{B}_1 \ell^2 h^{1/15} e^{-C(1)Lh^{8/15}} \, .
\end{split}
\end{align}
Now, using the triangle inequality together with \eqref{eq:lowerboundinfvol} and \eqref{eq:lowerboundperiodic}, we obtain
\begin{align}
    \big\vert\mathbb{E}^h_{L,p}(m_{\ell})-\mathbb{E}^h(m_{\ell})\big\vert \leq \left(\mathbb{E}_{L,+}^{h}(m_{\ell}) - \mathbb{E}^{h}(m_{\ell})\right) + \left(\mathbb{E}_{L,+}^{h}(m_{\ell}) - \mathbb{E}_{L,p}^{h}(m_{\ell})\right) \leq 2\hat{B}_1 \ell^2 h^{1/15} e^{-C(1)Lh^{8/15}} \, ,
\end{align}
which concludes the proof.
\end{proof}

\bigskip

\noindent{\bf Acknowledgments.} This project was started during a visit of the first two authors to the University of Luxembourg, whose kind hospitality they gratefully acknowledge. The first author thanks Adam Ran\c{c}on for an interesting correspondence, for bringing references \cite{bdr2024,Bru1995,pf1984} to his attention, and for useful comments on a draft of the manuscript. All authors thank an anonymous referee for a careful reading of the manuscript and useful suggestions. F.~Camia is supported by NYUAD through a personal research grant. G.~Peccati was supported by the Luxembourg National Research Fund (Grant: 021/16236290/HDSA). F.~Camia thanks the Hausdorff Research Institute for Mathematics (HIM) for its hospitality in July 2025 during the Trimester Program ``Probabilistic methods in quantum field theory'' funded by the Deutsche Forschungsgemeinschaft (DFG, German Research Foundation) under Germany's Excellence Strategy - EXC-2047/1 - 390685813.\\

\noindent{\bf Statements and Declarations:}
F. C. was at the time of submission an Editor of MPAG. Otherwise, on behalf of all authors, the corresponding author states that there is no conflict of interest and data sharing is not applicable to this article as no datasets were generated or analyzed during the current study.

\end{document}